\documentclass[12pt]{article}
\usepackage{epsf}
\usepackage{epsfig}
\usepackage{graphicx}
\usepackage{color}
\usepackage{amsmath}
\usepackage{mathrsfs}

\usepackage{amsthm}
\usepackage{latexsym}
\usepackage{amssymb}
\usepackage{amsmath}

\usepackage{stmaryrd}

\usepackage{epsf}
\usepackage{verbatim}

\newtheorem{theorem}{Theorem}[section]

\newtheorem{lemma}{Lemma}[section]

\newtheorem{definition}{Definition}[section]
\newtheorem{proposition}{Proposition}[section]

\textwidth 6.5in \textheight 9in
\topmargin -.05in \oddsidemargin -0.17in
 \hyphenpenalty=3000 \tolerance=1000

\begin{document}
\title{Base collapse of holographic algorithms}

\author{Mingji Xia\thanks{State Key Laboratory of Computer Science,
 Institute of Software, Chinese Academy of Sciences, Beijing, P. R.
 China.  {\tt mingji@ios.ac.cn} }}


\vspace{0.3in}
\maketitle

\begin{abstract}
A holographic algorithm solves a problem in domain of size $n$, by reducing it to counting perfect matchings in planar graphs. It may simulate a $n$-value variable by a bunch of $t$ matchgate bits, which has $2^t$ values. The transformation in the simulation can be expressed as a $n \times 2^t$ matrix $M$, called the base of the holographic algorithm. We wonder whether more matchgate bits bring us more powerful holographic algorithms.
In another word, whether we can solve the same original problem, with a collapsed base of size $n \times 2^{r}$, where $r<t$.

Base collapse was discovered for small domain $n=2,3,4$. For $n=3, 4$, the base collapse was proved under the condition that there is a full rank generator. We prove for any $n$, the base collapse to a $r\leq \lfloor \log n \rfloor$, with some similar conditions. One of them is that the original problem is defined by one symmetric function. In the proof, we utilize elementary matchgate transformations instead of matchgate identities.


\end{abstract}

%
%
%
%
%
%
%
%

\section{Introduction}

Holographic algorithm \cite{Valiant08-HA} is a method of designing polynomial time algorithm for counting problems. It usually solves counting problems in planar graphs by reducing it to counting perfect matchings in planar graphs, through holographic reductions.

A perfect matching of a graph, is a subset of edges such that each vertex appears exactly once, as the endpoint of these edges. The edges may have weights, and the weight of a perfect matching is the product of its edge weights.
Given a planar graph, the summation of its perfect matchings weights can be computed in polynomial time \cite{TemperleyFisher1961,Kasteleyn1961,Kasteleyn1967}.
We denote this problem by \#Pl-PerfMatch. We can design gadgets in \#Pl-PerfMatch, called matchgate \cite{Valiant02-QC,Valiant08-HA} . A holographic algorithm is designed by finding proper matchgates, and then applying a proper local transformation, called holographic reduction.

Holographic reduction \cite{Valiant08-HA,DBLP:conf/icalp/Valiant05,DBLP:conf/focs/Valiant06}
gives an important equivalence relation of counting problems. It transfers many properties between equivalent problems, including polynomial time algorithm \cite{DBLP:conf/icalp/Xia10}, exponential time algorithm, \#P-hardness \cite{DBLP:journals/tcs/XiaZZ07}. Its applications are not restricted to planar problems, for example, Fibonacci gates \cite{DBLP:conf/focs/CaiLX08} and counting graph homomorphisms \cite{DBLP:conf/icalp/Xia10}. In many complexity dichotomy theorems for counting problems in general graphs \cite{DBLP:conf/soda/CaiLX11, DBLP:journals/siamcomp/CaiLX11,DBLP:conf/soda/CaiLX13} , it is a very important method for both hardness and tractability. In this paper, to avoid confusion, we do not call these algorithms utilizing holographic reduction in general graphs holographic algorithm. Holographic algorithm means a problem is solved by reducing to \#Pl-PerfMatch.

After getting the complexity of a set of counting problems in general graphs, sometimes the set of planar version problems is studied. It is interesting that usually the new presented tractable problems are all solved by holographic algorithms \cite{DBLP:conf/focs/CaiLX10}, according to several pairs of dichotomy theorems. People may guess, that under holographic reduction, \#Pl-PerfMatch is the canonical form of all counting problems which is hard in general graphs but tractable in planar graphs. However, recently a new counting algorithm for planar graph appears \cite{DBLP:journals/corr/CaiFGW15}, breaking this guess.

We introduce holographic reduction and algorithm with a bit more intuitive details, starting from the problem definitions.

The well known SAT problem is one of the problems in the CSP family (Constraint Satisfaction  Problems). As other CSP problems, its instance can be drawn as a bipartite graph. Variable vertices are located at the left side, and the constraint vertices are located at the right side. A variable may appear in several constraints, connected by its edges.
The constraints of SAT are disjunction relations affected by negations of inputs. Let $\mathbf{D}$ denote these available relations. SAT is CSP$(\mathbf{D})$.

In fact, we can look a variable vertex also as a constraint, which is an equality relation. At the same time, its edges are looked as variables, which must take the same value as required by this equality relation. Let $\mathbf{E}$ denote the set of equality relations. SAT may be denoted as $\mathbf{E}|\mathbf{D}$. To define general problems, we may use any set as the available relation set for the vertices on the left side, as well as the set for the right side. In this paper, we always face the counting version of this kind of problems, with bipartite instances, described by two available function sets.

Holographic reduction was born with this kind of bipartite instances.
Suppose the original problem is  $\#\mathbf{F}|\mathbf{T}$ where $\mathbf{F}=\{F\}$ and $\mathbf{T}=\{T\}$ and $F,T$ are functions in variables taking $n$ discrete values. An example of its instance is shown as the first graph in Figure \ref{Fig HRA}.

\begin{figure}[hbtp]
	\begin{center}
		\includegraphics[width=\textwidth]{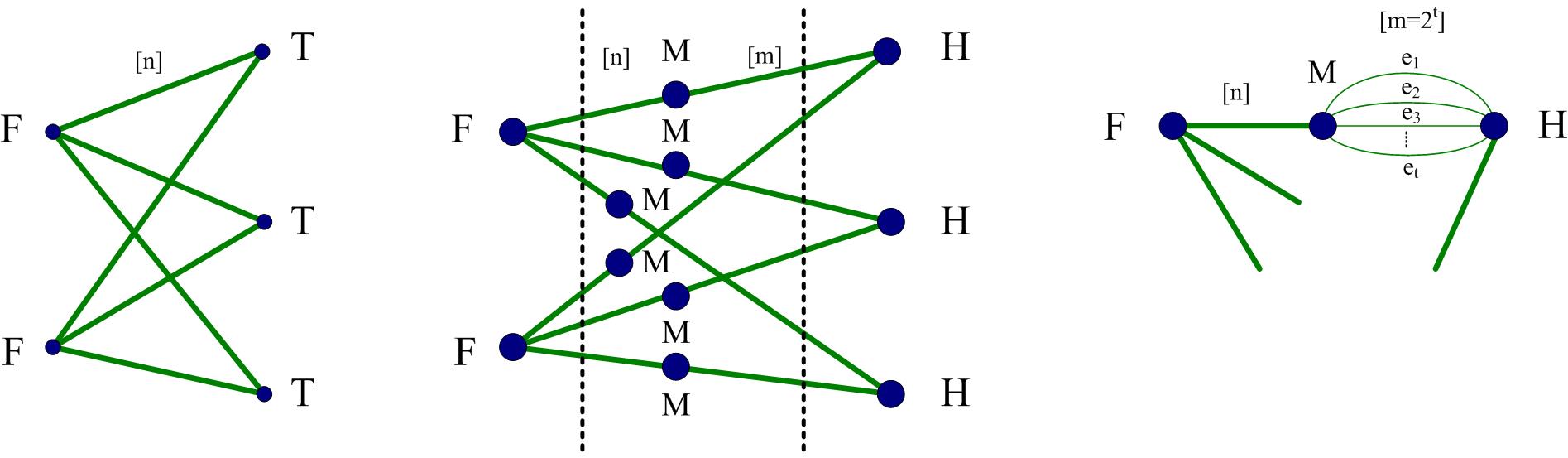}
	\caption{Holographic reduction and algorithm.}
	\label{Fig HRA}
	\end{center}
\end{figure}

The base $M$ of holographic reduction is a matrix of size $n \times m$. It can transform $H$ into $T$, where $H$ is a function in two variables of $m$ values. That is, as a gadget, a $H$ connected by two $M$s simulates a $T$. Let $\mathbf{H}=\{H\}$. We denote this transformation by $M \mathbf{H}=\mathbf{T}$. The gadget here is inherently similar to the gadgets used in reductions for proving NP-hardness.
We replace each $T$ by a gadget to get the second graph in Figure \ref{Fig HRA}.

If we separate the graph from the left dashed line, it is still $\#\mathbf{F}|\mathbf{T}$, also $\#\mathbf{F}|M\mathbf{H}$. If we sperate from the right dashed line, it is $\# \mathbf{F} M |\mathbf{H}$. Holographic reduction says $\#\mathbf{F}|M\mathbf{H}$ and $\# \mathbf{F} M |\mathbf{H}$ are equal.

In a holographic algorithm, the second problem $\# \mathbf{F} M |\mathbf{H}$  is solved by \#Pl-PerfMatch. Hence, all instances are planar graphs. Both $F M^{\otimes 3}$ and $H$ are simulated by matchgates. The matchgate realizing $F M^{\otimes 3}$ is called generator, and the one realizing $H$ is called recognizer.

Usually, in a holographic algorithm, $n$ is small and $m$ is $2$. However, there is no nothing forbidding to use a bunch edges as one input variable of $H$. For example, we may set $m=2^t$, and use $t$ edges to simulate a variable of $m$ values as the third graph  in Figure \ref{Fig HRA}.

To get a holographic algorithm, both sides are designed simultaneously. On one hand, we need to construct a matchgate $H$, such that $M$ transforms it into $T$. On the other hand, we also need $M$ to transform $F$ into a matchgate. There are characterizations about what's kinds of functions can be realized by matchgates and transformed matchgates \cite{DBLP:journals/ijsi/CaiC07,CCL2,DBLP:journals/jcss/CaiL11,DBLP:journals/mst/CaiL10,DBLP:journals/tcs/CaiL10}.

The functions that can be realized by matchgates must satisfy a system of equations called matchgate identities \cite{Valiant02-QC,CC1,CCL2,DBLP:journals/toc/CaiG14}. According to these identities, a matchgate function  has only polynomial many free values, and all other exponential many values are decided by them. It seems that we need a huge $t$ to simulate an arbitrary function.

However, in this paper, we prove $r \leq \lfloor \log n \rfloor$ bits are enough, under some conditions.
Because of the complicated requirements to design a holographic algorithm, we can not simulate arbitrary functions.
If we look the base $M$ as a signal channel, $ \lceil\log n \rceil$ is the lower bound to transfer a signal from the left side to the right side without any loss.
Because of the restriction of matchgates, $F$ can only utilize a part of $M$. If $n$ is not a power of $2$, the proof implicates at least $\log n- \lfloor \log n \rfloor$ bits of the channel are wasted.

When $n=2$, base collapse is proved in \cite{DBLP:conf/coco/CaiL07,DBLP:journals/tcs/CaiL09}.
When $n=3,4$, base collapse is proved in \cite{CaiFu}, under a condition about the existence of a full rank generator. These pioneer researches hint there may be a general base collapse.
The method in this paper is different from \cite{DBLP:journals/tcs/CaiL09,CaiFu}, which analyzes matchgate functions through matchgate identities. It inherits the matchgate transformation technique. This technique is firstly used in \cite{LX08stacs}, to prove the group property of nonsingular matchgate functions and that 2 bits matchgates are universal for matchcircuits.  In this paper, it is applied to not only invertible matchgates but also general matchgates. We get a deeper insight of the transformations, also a byproduct about the rank of matchgates.

The matchgate transformations are introduced in Section \ref{sec ele}.
In Section \ref{sec cnanonical form}, we use them to simplify an arbitrary matchgate into a canonical form.
The main theorem about the base collapse is proved in Section \ref{sec collapse}.
It is unexpected that the transformations specialized  for matchgate functions can deduce the property of the base of holographic algorithms, which is not necessary a matchgate function.

\section{Preliminary}

\subsection{Counting problem}

Let $[n]$ denote the set $\{0,1,\ldots,n-1\}$. Suppose $\mathbf{F}$ and $\mathbf{H}$ are two sets of functions in variables of domain $[n]$. We define a counting problem $\#\mathbf{F}|\mathbf{H}$. We call this kind of counting problems  {\bf \#BCSP} (\#Bi-restriction Constraint Satisfaction Problem). Intuitively, given a bipartite graph as its instance, we look each edge as a variable, and a vertex as a function in its incidental edges. A vertex on the left (resp. right) side must pick functions in $\mathbf{F}$ (resp. $\mathbf{H}$).  The answer is the summation of the product of these vertex functions, over all assignments to the edge variables. In the strict definition, we use mapping $\phi$ to specify which function is associated to each vertex, and use mapping $\psi$ to specify how a vertex function is applied to the vertex's edges.

An instance $(G,\phi_l,\phi_r)$ of $\#\mathbf{F}|\mathbf{H}$ is a bipartite graph
$G(U,V,E)$ and two mappings, $\phi_l:$ $U \rightarrow \mathbf{F}$ and $\phi_r:$ $V \rightarrow
\mathbf{H}$. Let $F_v$ denote the value of $\phi_l$ or $\phi_r$
on $v$. $\phi_l$ and $\phi_r$ satisfy that the arity of $F_v$ is $d_v$, the degree of $v$.
The bipartite graph $G$ is given as two one to one mappings,
$\psi_l:$ $(v,i) \rightarrow e$ and $\psi_r:$ $(u,i) \rightarrow e$,
where $v \in V$ and $u\in U$ respectively, $i \in [d_v]$ (resp.
$[d_u]$), and $e \in E$ is one of the edges incident to $v$ (resp.
$u$). Let $e_{v,i}=(\psi_l \cup \psi_r)(v,i)$, $v\in U \cup V$.


The value on this instance is defined as a summation over all assignments $\sigma$ of edges,
\[\#\mathbf{F}|\mathbf{H}(G,\phi_l,\phi_r)= \sum_{\sigma : E \rightarrow [n]}
\quad \prod_{v \in U \cup V}
F_v(\sigma(e_{v,1}),\sigma(e_{v,2}),\ldots,\sigma(e_{v,d_v}))
\textrm{.}\]


The two sets $\mathbf{F}$ and $\mathbf{H}$ tell the form of available functions that can be used as $F_v$ in a \#BSCP problem $\#\mathbf{F}|\mathbf{H}$. Since they do not affect how the value of an instance is defined, sometimes we use notation $\#(G,\phi_l,\phi_r)$ or $\#G$ instead of $\#\mathbf{F}|\mathbf{H}(G,\phi_l,\phi_r)$.

\subsection{Gadget}

A  gadget is a triple $\Gamma=(G,\phi,\psi)$, where $G(V,E,D)$ is a graph with vertex set $V$, common edge set $E$ and dangling edge set $D=\{x_1,x_2,\ldots,x_d\}$. A dangling edge $x_i$ is an edge containing only one endpoint. Formally, $E \subseteq V \times V$ and $D \subseteq V$. They are also called, internal edge and external edge. Mapping $\phi$ assigns each vertex $v$ a function $F_v$ of degree $d_v$. Mapping $\psi$ maps $(v,i), v \in V, i \in [d_v]$ to the internal and external edges incident to $v$, and $\psi(v,i)$ is denoted by $e_{v,i}$. That is, $\psi$ gives an ordering of $v$'s edges, such that they can be fed to $F_v$ without ambiguousness.

The function of a gadget $\Gamma$ is $F_\Gamma: [n]^d \rightarrow \mathcal{C}$.
Given $\tau: D \rightarrow [n]$, together with a $\sigma: E  \rightarrow [n]$ we have an assignment $\sigma \cup \tau$ to all edges.

\[F_\Gamma(\tau(x_1), \tau(x_2),\ldots, \tau(x_d))= \sum_{\sigma : E \rightarrow [n]}
\quad \prod_{v \in  V}
F_v((\sigma \cup \tau)(e_{v,1}),(\sigma \cup \tau)(e_{v,2}),\ldots,(\sigma \cup \tau)(e_{v,d_v}))
\textrm{.}\]

If we omit the notations $\tau$ and $\sigma$, and abuse the names of the edges, it is just $F_\Gamma(x_1,x_2,\ldots,x_d)=\sum_{e}  \prod_{v \in V}F_v(e_{v,1},e_{v,2},\ldots,e_{v,d_v})$.

An instance of $\#\mathbf{F}|\mathbf{H}$ is a gadget. It has arity $d=0$, and only $n^d$ value is defined for it.
A gadget of a specific problem must respect the requirement of that problem. A gadget of a $\#\mathbf{F}|\mathbf{H}$ problem, can be looked as a part of its instance.

The gadget with graph $G(\{v\}, E=\phi ,D=\{x_1,x_2,\ldots,x_d\}$) is very simple.
If $F_v$ is equal to the function $F_\Gamma$ of a gadget $\Gamma$, we can substitute each by the other in any instance or gadget. This property was used widely in reductions for both counting problems and decision problems, where the relation of a gadget is defined by $\bigvee \bigwedge$ instead of $\sum\prod$. This property and the associative law introduced later, are two statements from two aspects of the same thing. We omit the proof, which is straightforward from the definition.

\begin{figure}[hbtp]
	\begin{center}
		\includegraphics[width=0.5\textwidth]{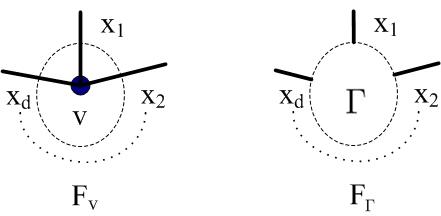}
	\caption{If $F_v=F_\Gamma$, they have the same utility.}
	\label{}
	\end{center}
\end{figure}

Suppose a binary function $F$ in $x_1,x_2$ is associated to a vertex of degree $2$. The $n^2$ values of $F$ can be expressed in many forms. Usually, we look the first gadget of Figure \ref{Fig vector and matrix} as the row vector form $F_{x_1x_2}$  indexed by $x_1x_2 \in [n]^2$. The second gadget shows a matrix form $(F_{x_1,x_2})$  indexed by row index $x_1$ and column index $x_2$. The third gadget shows the transpose of $(F_{x_1,x_2})$, $(F_{x_1,x_2})'=(F_{x_2,x_1})$  indexed by row index $x_2$ and column index $x_1$. The expressions works for functions of higher arities. Just imagine the edge in the pictures as a bunch of edges.

\begin{figure}[hbtp]
	\begin{center}
		\includegraphics[width=0.7\textwidth]{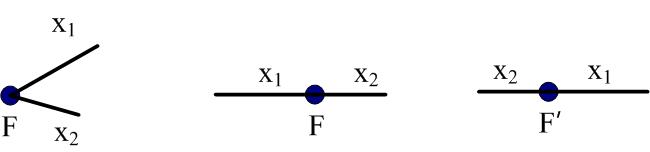}
	\caption{Different forms of a binary function $F$.}
	\label{Fig vector and matrix}
	\end{center}
\end{figure}

The functions of some gadgets coincide with vector matrix multiplications. For example, in Figure \ref{Fig multiplication},
\[{F_\Gamma}_{\; x_1}=F_{e_1} H_{e_1,x_1} \text { , and} \]
\[{F_\Delta}_{\; x_1,x_2x_3}=F_{x_1,e_1e_2} H_{e_1e_2,x_2x_3}.\]

\begin{figure}[htb]
	\begin{center}
		\includegraphics[width=0.5\textwidth]{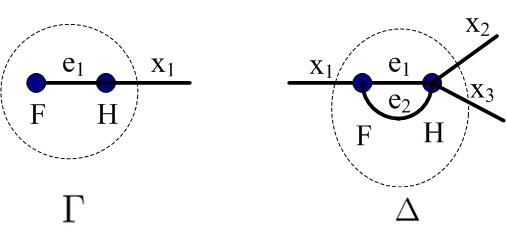}
	\caption{Examples of vector and matrix multiplications.}
	\label{Fig multiplication}
	\end{center}
\end{figure}

Tensor product is also a special composition in matchgates. For examples, in Figure \ref{Fig tensor},
\[{F_\Gamma}_{\; x_1x_3,x_2x_4}=M_{x_1,x_2} \otimes  N_{x_3,x_4}. \]

We get the second gadget in two steps. Firstly, we combine $M$ and $N$ and the empty on the edge $x_3$ to get a matrix
$H_{e_1e_2x_3,x_1x_2x_3}=M_{e_1,x_1} \otimes  N_{e_2,x_2} \otimes E_{x_3,x_3}$, where $E$ stands for the identity matrix.
Secondly, we multiply $F$ and $H$ as a row vector multiplying with a matrix. Put together,
\[{F_\Delta}_{\; x_1x_2x_3}=F_{e_1e_2x_3} (M_{e_1,x_1} \otimes  N_{e_2,x_2} \otimes E).\]

\begin{figure}[htb]
	\begin{center}
		\includegraphics[width=0.5\textwidth]{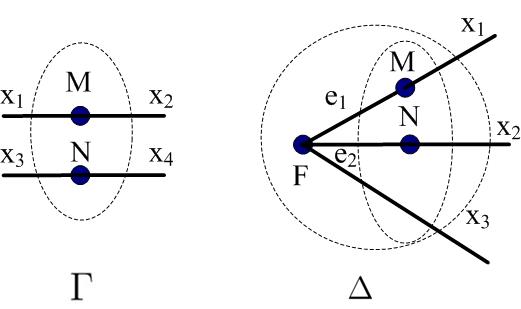}
	\caption{Tensor product.}
	\label{Fig tensor}
	\end{center}
\end{figure}

We denote $M \times M$ by $M^{\otimes 2}$. Generally, $M^{\otimes d+1}=M^{\otimes d} \otimes M$.

Both tensor product and matrix multiplication obeys associative law. Generally, in any gadget, we can compute any part firstly and reach the same final result.

\subsection{Holant theorem}

We say two problems  $\#\mathbf{F}|\mathbf{H}$ and
$\#\mathbf{P}|\mathbf{Q}$ are {\it result equivalent}, if there are
two bijections $\sigma_l:$ $ \mathbf{P}\rightarrow \mathbf{F}$ and
 $\sigma_r:$ $ \mathbf{Q}\rightarrow \mathbf{H}$, such that
for any instance $(G,\phi_l,\phi_r)$,
\[ \#\mathbf{F}|\mathbf{H}(G,\phi_l,\phi_r)=  \#\mathbf{P}|\mathbf{Q}(G,\sigma_l \circ \phi_l,\sigma_r \circ\phi_r).\]

Let $M \mathbf{H}=\{M^{\otimes R_H}H | H \in \mathbf{H}\}$, where $R_H$ denotes the arity of $H$. Similarly, let
$\mathbf{F} M=\{F M^{\otimes R_F} | F \in \mathbf{F}\}$.

\begin{theorem} [Holant theorem \cite{Valiant08-HA}]  \label{thm:holant}
Suppose $F$ is a function over $[n]^s$, and $H$ is a function over
$[m]^t$, and $M$ is an $n \times m$ matrix. Problems
$\#\{F\}|\{M^{\otimes t}H\}$ and $\#\{F M^{\otimes s}\}|\{H\}$
 are result equivalent. Generally, $\#\mathbf{F}|M\mathbf{H}$ and $\#\mathbf{F}M|\mathbf{H}$
 are result equivalent, under the proper bijections.
\end{theorem}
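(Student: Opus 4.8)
The plan is to prove the general statement $\#\mathbf{F}|M\mathbf{H}$ is result equivalent to $\#\mathbf{F}M|\mathbf{H}$ directly from the definition of $\#\mathbf{F}|\mathbf{H}$ given above; this is the classical Holant expansion, and the single-function statement is the special case $\mathbf{F}=\{F\}$, $\mathbf{H}=\{H\}$, in which every left vertex is forced to have degree $s$ and every right vertex degree $t$. The intuition is to place the $n\times m$ matrix $M$ as a small box on every edge of a bipartite instance, with its $[n]$-side facing the $U$-endpoint and its $[m]$-side facing the $V$-endpoint: absorbing all boxes into the right vertices turns each $H_v$ into $M^{\otimes d_v}H_v$ and gives $\#\mathbf{F}|M\mathbf{H}$, while absorbing them into the left vertices turns each $F_u$ into $F_uM^{\otimes d_u}$ and gives $\#\mathbf{F}M|\mathbf{H}$; both instances have the same underlying graph, hence the same value.

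To make this precise I would fix an instance $(G,\phi_l,\phi_r)$ of $\#\mathbf{F}|M\mathbf{H}$ with $G=G(U,V,E)$, write $\phi_r(v)=M^{\otimes d_v}H_v$ with $H_v\in\mathbf{H}$, abbreviate $\sigma|_w=(\sigma(e_{w,1}),\dots,\sigma(e_{w,d_w}))$ (and likewise $\tau|_w$ for any $\tau\colon E\to[m]$), and expand each right-vertex factor in
\[\#\mathbf{F}|M\mathbf{H}(G,\phi_l,\phi_r)=\sum_{\sigma\colon E\to[n]}\ \prod_{u\in U}F_u(\sigma|_u)\ \prod_{v\in V}(M^{\otimes d_v}H_v)(\sigma|_v)\]
by the identity $(M^{\otimes d}H)(\vec y)=\sum_{\vec x\in[m]^d}\big(\prod_{i}M_{y_i,x_i}\big)H(\vec x)$. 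Since $G$ is bipartite, the pairs $(v,i)$ with $v\in V$, $i\in[d_v]$ biject with $E$ through $e_{v,i}$, so the new $[m]$-valued choices made at the $V$-vertices assemble into a single map $\tau\colon E\to[m]$, giving
\[\#\mathbf{F}|M\mathbf{H}(G,\phi_l,\phi_r)=\sum_{\sigma\colon E\to[n]}\ \sum_{\tau\colon E\to[m]}\ \prod_{u\in U}F_u(\sigma|_u)\ \prod_{e\in E}M_{\sigma(e),\tau(e)}\ \prod_{v\in V}H_v(\tau|_v).\]
Next I would interchange the two sums, fix $\tau$, and use that $E$ is also partitioned by the $U$-vertices: a map $\sigma\colon E\to[n]$ is exactly an independent choice of one $[n]$-tuple for each $u\in U$, so the inner sum factors over $U$ and, by $(FM^{\otimes d})(\vec x)=\sum_{\vec y\in[n]^d}F(\vec y)\prod_i M_{y_i,x_i}$, equals $\prod_{u\in U}(F_uM^{\otimes d_u})(\tau|_u)$. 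Substituting back yields
\[\#\mathbf{F}|M\mathbf{H}(G,\phi_l,\phi_r)=\sum_{\tau\colon E\to[m]}\ \prod_{u\in U}(F_uM^{\otimes d_u})(\tau|_u)\ \prod_{v\in V}H_v(\tau|_v),\]
and the right-hand side is, by definition, the value of $\#\mathbf{F}M|\mathbf{H}$ on the same graph $G$ with each $F_u$ relabelled to $F_uM^{\otimes d_u}$ and each $M^{\otimes d_v}H_v$ relabelled to $H_v$; these relabellings supply the required bijections $\sigma_l$, $\sigma_r$.

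I do not expect a conceptual obstacle; the effort is in the bookkeeping. One must keep the edge-orderings $\psi_l,\psi_r$ consistent so that the expansion of $(M^{\otimes d_v}H_v)(\sigma|_v)$ lines up its $M$-factors with the correct edges, and one must phrase ``$E$ is partitioned by $U$'' and ``$E$ is partitioned by $V$'' precisely, since those are exactly the two sides of the bipartition that legitimize the two regroupings (the $\tau$-choices come from $V$; the $\sigma$-sum factors over $U$). The only genuinely delicate point in the general-set version is that $F\mapsto FM^{\otimes R_F}$ need not be injective on $\mathbf{F}$, which is why the statement reads ``under the proper bijections'': one regards $\mathbf{F}M$ as the family indexed by $\mathbf{F}$ so that $\sigma_l$ is a bijection by construction, and since the displayed identities are established instance by instance this changes nothing in the argument.
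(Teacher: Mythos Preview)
Your proof is correct and follows exactly the approach the paper sketches: place a copy of $M$ on every edge and absorb it either into the $V$-vertices (yielding $\#\mathbf{F}|M\mathbf{H}$) or into the $U$-vertices (yielding $\#\mathbf{F}M|\mathbf{H}$), which the paper phrases simply as ``by associative law'' via Figure~\ref{Fig HRA}. You have written out explicitly the double-sum expansion and the two regroupings that the paper leaves implicit, and your remarks on edge-orderings and on treating $\mathbf{F}M$ as a family indexed by $\mathbf{F}$ address bookkeeping points the paper glosses over.
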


We illustrate Holant theorem by an instance $G$ as shown in the first graph of Figure \ref{Fig HRA}. In this instance, $F$ has arity $3$ and $H$ has arity $2$. In the second picture, If cut through the left dash line, we get problem $\#\{F\}|\{M^{\otimes 2}H\}$, while if cut through the right dash line, we get problem $\#\{F M^{\otimes 3}\}|\{H\}$. By associative law, both problems give the same value $\#G$.

\subsection{Planar counting problem}

We can define planar \#BCSP problem \#Pl-$\mathbf{F}|\mathbf{H}$ and its gadget similarly, with the following differences.

An instance or a gadget contains also a planar embedding of $G$, such that all external edges dangling on the outer face. Mapping $\phi$ must order the edges of $v$ such that in the planar embedding when going around $v$ clockwise or anticlockwise, starting from $e_{v,1}$, we meet them in the order $e_{v,1},e_{v,2},\ldots,e_{v,d_v}$.
In any gadget, the order of external edges $x_1,x_2,\ldots,x_d$ satisfies that when going along the outer face of $\Gamma$ anticlockwise, starting from $x_1$, we meet dangling edges in the order $x_1,x_2,\ldots,x_d$.

\begin{definition} \label{def  C}
Call the following function $C$ in Boolean variables $x_1,x_2,x_3,x_4$ sign crossover function.
$C(0000)=C(0101)=C(1010)=1$, $C(1111)=-1$, and $C$ is $0$ on other inputs.
In matrix form $(C_{x_1x_2,x_4x_3})=\left ( \begin{array}{cccc} 1 & 0 & 0 & 0  \\ 0 & 0 & 1 & 0 \\ 0 & 1 & 0 & 0  \\ 0 & 0 & 0 & -1  \end{array} \right )$.
\end{definition}

In a \#Pl-$\mathbf{F}|\mathbf{H}$ problem, the order of variables becomes important.
For example, let $C$ is a function in $\mathbf{F}$.  $C'(a_2,a_1,a_3,a_4)=C(a_1,a_2,a_3,a_4)$ is another function, which is not necessary available at left side in \#Pl-$\mathbf{F}|\mathbf{H}$, although in the non-planar version $\#\mathbf{F}|\mathbf{H}$ problem, we can simulate $C'$ by $C$ easily.

The Holant theorem also holds for planar \#BCSP problems.

\subsection{Matchgate}

Matchgate is defined and used in two contexts, matchcircuits \cite{Valiant02-QC}  and  holographic algorithms \cite{Valiant08-HA}. It is gadget in \#Pl-PerfMatch problem. The function of a matchgate is also called signature.
Sometimes, we do not distinguish a matchgate $\Gamma$ and its function $F_\Gamma$, but we know that a function may be realized by different matchgates.

We use $[F_0, F_1, \cdots, F_d]$ to denote a function $F$ in $d$ Boolean variables, where $F_i$ is the value of $F$ on inputs of Hamming weight $i$. $\mathbf{F}_{\text{ExtOne}}=\{[0,1,0,\cdots,0]\}$ is the set of functions called Exactly One function. $\mathbf{W}=\{[1,0,w]| w \in \mathcal{C}\} $ is the proper set of functions corresponding to complex number weighted edges. \#Pl-PerfMatch problem is \#Pl-$\mathbf{W}|\mathbf{F}_{\text{ExtOne}}$.  We use $\mathbf{M}$ to denote the set of the functions that can be realized by matchgates. \#Pl-$\mathbf{M}|\mathbf{M}$ can be reduced to \#Pl-PerfMatch\footnote{In a general \#BCSP problem, we shall be careful about the connection of gadgets, which respects the bipartition of the problem. Here, in \#Pl-$\mathbf{W}|\mathbf{F}_{\text{ExtOne}}$, we can realize a binary equality function, whose two dangling edges shall be connected to a left vertex and a right vertex respectively. We can use it to change the type of dangling edges, so we do not need to worry the connection, and define only one set $\mathbf{M}$.}, and it is the problem utilized in holographic algorithms.

The sign crossover function $C$ can be realized by matchgate \cite{Valiant08-HA,DBLP:journals/toc/CaiG14} . Figure \ref{Fig C} gives a matchgate whose function is $C$, where the default edge weight is $1$. For simplicity, we do not draw the vertex standing for the edge weight function. When all dangling edges take value $1$, among all $2^7$ assignments to internal edges, there is only on assignment such that all six $F_{\text{ExtOne}}$ give value $1$ and at the same time the weight function of the center edge gives $-1$. Hence, $C(1111)=-1$.

\begin{figure}[hbtp]
	\begin{center}
		\includegraphics[width=0.3\textwidth]{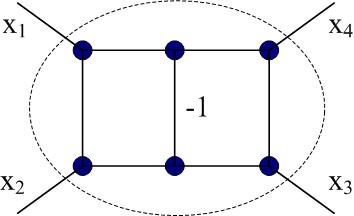}
	\caption{A matchgate realizing sign crossover function.}
	\label{Fig C}
	\end{center}
\end{figure}

A matchgate function can be written as a matrix. If an external edge is used as part of row (resp. column) index, we call it input (resp. output) edge. Recall the matrix in Definition \ref{def  C}. $x_1$ and $x_2$ are input edges, and $x_3$ and $x_4$ are output edges.

\subsection{Pfaffian}

Pfaffian  is a function defined for a skew-symmetric matrix which can be looked as a weighted graph $G(V,E,W)$.
Suppose $V=\{1,2,\ldots, 2n\}$. The sign of a perfect matching $\{(i_1,i_2),(i_3,i_4),\ldots,(i_{2n-1},i_{2n})\}$ in $G$ is either $1$ or $-1$ decided by the parity of permutation $(i_1, i_2, \cdots, i_{2n})$.
Pfaffian$(G)=\sum_{\pi=(i_1, i_2, \cdots, i_{2n})} \text{sgn}(\pi) W((i_1,i_2)) \cdots W((i_1,i_2))$.

The FKT algorithm computes \#Pl-PerfMatch by reducing it to Pfaffian.

Pfaffian can also be reduced to \#Pl-PerfMatch through the sign crossover function $C$.
If we arrange the vertices of $G$ on a circle, and draw the edges as lines with a tiny shake such that no 3 lines share a crossing,
the parity of the number of crossings between lines in $\pi$ is equal to the parity of $\pi$.
 Each vertex of $G$ is associated with an Exactly One function.
We change each crossing to a vertex of degree 4,
and then  replace each new vertex by one matchgate  $C$ to get a planar graph $G'(V',E',W')$.
\[\text{Pfaffian}(G)=2^n n! \#\text{Pl-PerfMatch}(G').\]

In $G'$, we add an external edge to each vertex in $V$, to get a matchgate denoted by $\Gamma_G$.

\begin{definition}
We call the above $\Gamma_G$ the matchgate introduced by $G$, and call $G$ the underlying graph of matchgate $\Gamma_G$.
\end{definition}

If the input $Z_{i,j}$ of  $\Gamma_G$ has only two $0$s and leaves only vertices $i$ and $j$ on the outer face unmatched by external edges, then $\Gamma_G(X)=W(i,j)$ by the definition and construction.
In Section \ref{sec ele}, we get the weights of a underlying graph by this equation.
The weight matrix of an underlying graph is an intuitive way to express the core part of an introduced matchgate function.

\subsection{The core of a matchgate}

Grassmann-Pl\"ucker identities are a set of polynomial identities about the Pfaffian of submatrices of $G$.
Grassmann-Pl\"ucker identities can be translated to identities about signature of matchgates, called matchgate identities \cite{Valiant02-QC,CC1,CCL2,DBLP:journals/toc/CaiG14}.

\begin{theorem}[\cite{CC1,CCL2,DBLP:journals/toc/CaiG14}] \label{prop:kehua1}
If $F$  is the signature of some matchgate, then $F$ satisfies all matchgate identities.
\end{theorem}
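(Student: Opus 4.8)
The plan is to derive the matchgate identities from the classical Grassmann--Pl\"ucker identities for Pfaffians, exploiting the very reduction of \#Pl-PerfMatch to the Pfaffian that underlies the FKT algorithm. The crux is that the $2^d$ signature entries of a $d$-ary matchgate are, up to one \emph{common} sign, the Pfaffians of the principal submatrices of a single fixed skew-symmetric matrix.

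First I would normalize the matchgate. Given a matchgate $\Gamma$ with dangling edges $x_1,\dots,x_d$, one may assume (subdividing or adding pendant vertices as needed, which does not change the signature) that $x_i$ is attached to a distinct internal vertex $v_i$, all on the outer face. Applying the $C$-gadget expansion so that the internal graph becomes an honest planar instance of \#Pl-PerfMatch, equivalently fixing a Pfaffian orientation, one obtains a fixed skew-symmetric matrix $A$ on a linearly ordered internal vertex set $[N]$ with the following property: for every $X\in\{0,1\}^d$ the entry $\Gamma(X)$ equals the weighted perfect-matching sum of the internal graph with the vertices $\{v_i : X_i=1\}$ deleted, and hence $\Gamma(X)=\varepsilon\cdot\mathrm{Pf}\bigl(A[\,[N]\setminus\{v_i : X_i=1\}\,]\bigr)$ for a sign $\varepsilon$ depending only on $N$ and the chosen ordering, not on $X$. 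When the surviving index set has odd size, both the matching sum and the Pfaffian vanish, which is precisely the trivial case of the identities to be proved.

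Next I would invoke Grassmann--Pl\"ucker. For a skew-symmetric $A$, index sets $S_1,S_2$ of the appropriate parity, and indices $p_1<\dots<p_s$, the Pfaffian Grassmann--Pl\"ucker relation has the shape
\[
\sum_{k=1}^{s} (-1)^{k}\,\mathrm{Pf}\bigl(A[S_1\cup\{p_k\}]\bigr)\,\mathrm{Pf}\bigl(A[S_2\setminus\{p_k\}]\bigr)=0 .
\]
Specializing $S_1,S_2$ to the complements of the deleted-vertex sets coming from a pair of patterns $\alpha,\beta\in\{0,1\}^d$, and letting $p_k$ range over the positions where $\alpha$ and $\beta$ differ, each Pfaffian in the sum becomes (up to the fixed sign $\varepsilon$) a signature entry $\Gamma(\alpha^{p_k})$ or $\Gamma(\beta^{p_k})$, where the superscript denotes flipping the indicated bit.

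The remaining work is sign reconciliation: after dividing out $\varepsilon^2$ and tracking how the positions $v_i$ sit inside the global ordering of $[N]$, one checks that the alternating signs $(-1)^k$ of Grassmann--Pl\"ucker reproduce exactly the alternating signs appearing in the statement of the matchgate identities, so that the vanishing Grassmann--Pl\"ucker sum \emph{is} the matchgate identity attached to $(\alpha,\beta)$. Since every matchgate identity arises from some such pair, the theorem follows. I expect this last sign bookkeeping --- together with verifying that the common sign $\varepsilon$ is genuinely pattern-independent and that the degenerate parity cases are precisely the trivially valid instances --- to be the main obstacle; the Grassmann--Pl\"ucker relations themselves, and the perfect-matching-to-Pfaffian passage, are standard.
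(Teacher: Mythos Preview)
Your proposal is correct and follows precisely the route the paper itself indicates: the paper does not give a self-contained proof of this theorem but cites it from the literature, noting only that ``Grassmann--Pl\"ucker identities can be translated to identities about signature of matchgates, called matchgate identities.'' Your plan---realize the signature entries as Pfaffians of principal submatrices of a single skew-symmetric matrix via the FKT/Pfaffian-orientation machinery, then specialize the Grassmann--Pl\"ucker relations and reconcile signs---is exactly this translation carried out in detail, and it is the argument of the cited references.
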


Define $U_k(X)=\{Y| \text{dist}(X,Y) \leq k\}$, where $X, Y \in \{0,1\}^n$, and $\text{dist}(X,Y)$ is the Hamming distance.

\begin{theorem}[\cite{CC1,CCL2,DBLP:journals/toc/CaiG14}] \label{thm: unique}
If a function $F$ is not $0$ on input $X$ and it  satisfies all matchgate identities, then $F$ is uniquely decided by its values on the set $U_2(X)$.
\end{theorem}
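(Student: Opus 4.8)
The plan is to show that every value $F(Y)$ is recovered from the values of $F$ on $U_2(X)$ by induction on the Hamming distance $\text{dist}(X,Y)$, using a single well-chosen instance of the matchgate identities at each step. First I would recall the ``two-pattern'' shape of the matchgate identities: for any two strings $\alpha,\beta\in\{0,1\}^n$, if $\{p_1<p_2<\cdots<p_\ell\}$ is the set of coordinates on which $\alpha$ and $\beta$ differ, then
\[\sum_{k=1}^{\ell}(-1)^{k}\,F(\alpha\oplus e_{p_k})\,F(\beta\oplus e_{p_k})=0,\]
where $e_p$ is the $p$-th unit vector and $\oplus$ is coordinatewise XOR. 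These come from the Grassmann--Pl\"ucker identities for the underlying skew-symmetric matrix via the standard dictionary; here we only use that $F$ satisfies them as a hypothesis, not that $F$ is realizable (Theorem~\ref{prop:kehua1} is what would supply them for a genuine signature).

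The base case is $\text{dist}(X,Y)\le 2$, where $Y\in U_2(X)$ and there is nothing to do. For the inductive step, fix $Y$ with $\text{dist}(X,Y)=w\ge 3$ and let $D=\{i_1,\dots,i_w\}$ be the set of coordinates on which $X$ and $Y$ differ. Apply the identity above with the pivot choice $\alpha=Y\oplus e_{i_1}$ and $\beta=X\oplus e_{i_1}$. Then $\alpha$ differs from $X$ exactly on $D\setminus\{i_1\}$ and $\beta$ differs from $X$ exactly on $\{i_1\}$, so $\alpha$ and $\beta$ differ exactly on the full set $D$, and the sum runs over $k=1,\dots,w$. The $k=1$ term is $\pm F(Y)F(X)$, since $\alpha\oplus e_{i_1}=Y$ and $\beta\oplus e_{i_1}=X$. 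For every $k\ge 2$ the term is $\pm F(\alpha\oplus e_{i_k})F(\beta\oplus e_{i_k})$, where $\alpha\oplus e_{i_k}$ differs from $X$ on $D\setminus\{i_1,i_k\}$ (distance $w-2$) and $\beta\oplus e_{i_k}$ differs from $X$ on $\{i_1,i_k\}$ (distance $2$). By the induction hypothesis both of these $F$-values are already determined by the values of $F$ on $U_2(X)$, since $w-2<w$ and distance $2$ is a base case. Because $F(X)\ne 0$, we may divide and express $F(Y)$ as an explicit polynomial in already-determined values; this closes the induction.

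The main thing to get right is the bookkeeping in the inductive step: verifying that the pivot choice really makes the symmetric difference of $\alpha$ and $\beta$ equal to all of $D$ (so that $F(Y)$ occurs, with the nonzero coefficient $\pm F(X)$), and that every remaining term lands at distance at most $\max(w-2,2)<w$ from $X$. The signs $(-1)^k$ play no role beyond the fact that the $k=1$ coefficient is nonzero. A cosmetic alternative is to first reduce to $X=0^n$, using that the family of matchgate identities is invariant under the global bit-flip $F(Z)\mapsto F(Z\oplus X)$ (so $U_2(X)$ becomes the weight-$\le 2$ inputs and $F(0^n)\ne 0$), after which the statement is the familiar fact that a matchgate signature is pinned down by its weight-$\le 2$ values once some value is nonzero; but the computation above works directly and does not even need that invariance.
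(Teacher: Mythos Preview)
Your proposal is correct and is essentially the same argument the paper sketches: the paper says that for any $Y\in U_{k+2}(X)\setminus U_k(X)$ there is a matchgate identity containing a single occurrence of $F(Y)$ (paired with $F(X)$) and otherwise only values of $F$ on $U_k(X)$, so $U_2$ determines $U_4$, then $U_6$, and so on. You have simply made this explicit by exhibiting the particular identity (with $\alpha=Y\oplus e_{i_1}$, $\beta=X\oplus e_{i_1}$) and running the induction on $\mathrm{dist}(X,Y)$; the bookkeeping you describe---that the $k=1$ term is $\pm F(Y)F(X)$ and every $k\ge 2$ term lands at distances $w-2$ and $2$ from $X$---is exactly the content of the paper's one-line observation.
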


We call such a set $U_2(X)$ the core of the matchgate.
 The proof is based on  the observation that, for any $Y \in U_{k+2}(X)-U_{k}(X)$, there is a matchgate identity includes only one $F(Y)$ and the values of $F$ on $U_{k}(X)$. In this way, $U_2$ decides $U_4$, and $U_4$ decides $U_6$, and so on until $U_n$.

\begin{theorem}[\cite{CC1,CCL2,DBLP:journals/toc/CaiG14}]\label{thm:base graph}
Given a nonzero value $f_X$ and $n \choose 2$ arbitrary values $f_Y$, for each $Y \in U_2(X)$,
there is a matchgate $F$ such that $F(Y)=f_Y$ for all $Y \in U_2(X)$.
\end{theorem}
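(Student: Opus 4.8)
The plan is to produce an explicit planar matchgate, built from the ``matchgate introduced by a graph'' construction of the previous subsection, and to do so I would first normalise the distinguished input. The class of matchgate signatures is closed under complementing any fixed set of input coordinates: to complement coordinate $i$, attach to external edge $i$ a fresh one-vertex matchgate carrying the Exactly One function $[0,1,0]$ and let its remaining edge be the new external edge $i$; by the Holant theorem (Theorem \ref{thm:holant}) the resulting gadget realises the signature that sends $Y$ to $F$ evaluated at $Y$ with its $i$-th entry flipped. Iterating over $\{i : X_i = 0\}$ I may assume $X = 1^n$; since complementation preserves Hamming distance it carries $U_2(X)$ and the prescribed data to $U_2(1^n)$, whose $\binom n2$ distance-two elements are exactly the strings $Z_{ij}$ having $0$'s precisely in positions $i<j$. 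Moreover a matchgate signature is supported on a single Hamming-weight parity class (a perfect matching needs an even number of vertices), so any matchgate nonzero at $1^n$ vanishes at all $n$ distance-one neighbours automatically; the genuinely free data are the $\binom n2 + 1$ values $f_{1^n} = f_X$ and $\{f_{Z_{ij}}\}_{i<j}$.

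Next I would choose the underlying graph. Let $G$ be the complete graph on $\{1,\dots,n\}$ with edge weights $W(i,j) := f_{Z_{ij}}/f_X$ (well defined since $f_X \neq 0$), and let $\Gamma_G$ be the matchgate introduced by $G$: draw the vertices on a circle, perturb the straight chords so that no three cross at one point, replace each crossing by a copy of the sign crossover matchgate $C$, and attach one external edge to each of $1,\dots,n$, met in the cyclic order $1,\dots,n$ along the outer face. Finally place, disjointly in the outer face, a single edge of weight $f_X$ whose two endpoints carry Exactly One functions and no external edge; call the resulting planar matchgate $\Gamma$. Adjoining a component with no external edges changes neither arity nor planarity and multiplies the signature by $f_X$, so $\Gamma$ is realisable and $\Gamma = f_X \cdot \Gamma_G$ as signatures.

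Now I would evaluate $\Gamma$ on the core of $1^n$. On input $1^n$ every vertex $1,\dots,n$ is matched through its own external edge, so each crossover gadget is forced to match internally (each contributing $C(0000) = 1$) and the weight-$f_X$ edge is forced; hence $\Gamma(1^n) = f_X$. On input $Z_{ij}$ exactly $i$ and $j$ remain to be matched internally, and --- this is the content recorded in the previous subsection as $\Gamma_G(Z_{ij}) = W(i,j)$ --- the only internal matching routes $i$ to $j$ along the crossover-subdivided edge $(i,j)$ of $G$, any alternative route requiring some third vertex to be matched internally, which $Z_{ij}$ forbids; hence $\Gamma(Z_{ij}) = f_X \cdot W(i,j) = f_{Z_{ij}}$. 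Together with the parity remark this shows $\Gamma$ realises the prescribed values on all of $U_2(1^n)$, and undoing the coordinate complementation of the first step turns $\Gamma$ into the matchgate required for the original $X$.

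The substantive point is the evaluation step: one must check that the crossover construction reproduces the bare edge weight $W(i,j)$ at $Z_{ij}$ with the correct sign and without spurious perfect matchings, and that the reference entry $f_X$ is likewise uncorrupted --- this is exactly where the design of $C$, and the identity $\Gamma_G(Z_{ij}) = W(i,j)$ established earlier, carry the argument. The coordinate-flip reduction and the scaling edge are routine once one grants that composition and disjoint union of matchgates yield matchgates. (For $n \le 4$ no crossings arise at all; for $n \ge 5$ one could alternatively realise each single edge weight by its own small planar gadget and combine them, but the $C$-based route is cleaner since $C$ is already in hand.)
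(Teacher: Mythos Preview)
Your proposal is correct and follows essentially the same route as the paper. The paper also reduces to the case $X=\mathbf{1}$, $f_X=1$ via coordinate flips and a global-factor scaling (your disjoint weight-$f_X$ edge is exactly the paper's global-factor transformation), then constructs the matchgate $\Gamma_G$ introduced by the complete graph with weights $W(i,j)=f_{Z_{ij}}$ and verifies $\Gamma_G(Z_{ij})=W(i,j)$; your added remark on parity forcing the distance-one values to vanish is a helpful clarification the paper leaves implicit.
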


We explain how to prove a special case that $X={\bf 1}$ and $f_X=1$.
Let $Z_{i,j}$ denote the 0-1 string whose $i$th and $j$th bits are $1$ and all other bits are $0$.
$\{Z_{i,j}| i \in [n], j\in [n], i\neq j\}=U_2(\mathbf{1})$.
Define a weighted graph $G(V=[n],E,W)$. $W(i,j)=f_{Z_{i,j}}$. $G$ introduces a  matchgate $\Gamma_G$.
It is easy to check that $\Gamma_G(Z_{i,j})=W(i,j)=f_{Z_{i,j}}$.

\begin{theorem}[\cite{CC1,CCL2,DBLP:journals/toc/CaiG14}] \label{prop:kehua2}
If $H$ satisfies all matchgate identities, then $H$  is the signature of some matchgate.
\end{theorem}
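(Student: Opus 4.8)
The plan is to read this converse off from the three preceding results: Theorem~\ref{thm:base graph} manufactures a matchgate with prescribed values on a core $U_2(X)$, Theorem~\ref{prop:kehua1} guarantees that such a matchgate satisfies all matchgate identities, and Theorem~\ref{thm: unique} then forces its signature to agree with $H$ everywhere. First I would dispose of the trivial case $H\equiv 0$: the zero function satisfies every matchgate identity, and it is realized, for instance, by attaching each of the $n$ external edges to its own degree-$1$ vertex and adding, disjointly, a $3$-cycle of internal vertices carrying Exactly-One functions; the $3$-cycle admits no perfect matching, so this planar gadget has no perfect matching for any assignment of its external edges and its signature is identically $0$. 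Henceforth assume $H(X)\neq 0$ for some $X\in\{0,1\}^n$.

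Next I would invoke Theorem~\ref{thm:base graph} with support point $X$, nonzero value $f_X:=H(X)$, and $f_Y:=H(Y)$ for each of the $\binom{n}{2}$ strings $Y\in U_2(X)$ at Hamming distance exactly $2$ from $X$. This yields a matchgate whose signature $F$ satisfies $F(Y)=H(Y)$ for all $Y\in U_2(X)$, and in particular $F(X)=f_X\neq 0$. If one only has at hand the special case $X=\mathbf{1}$ explained after that theorem, one first composes $H$ on every coordinate where $X$ is $0$ with the bit-flip function $[0,1,0]$ --- equivalently, splices that tiny gadget onto the matching external edges --- to move the support to $\mathbf{1}$, builds the matchgate there, and then splices the flips back off; this changes nothing in what follows.

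Finally I would compare $F$ with $H$. By Theorem~\ref{prop:kehua1}, $F$ satisfies all matchgate identities; by hypothesis so does $H$; and both are nonzero at $X$. By Theorem~\ref{thm: unique}, a function that satisfies all matchgate identities and is nonzero at $X$ is completely determined by its restriction to $U_2(X)$; applying this to $F$ and to $H$, and using that they already agree on $U_2(X)$, we conclude $F=H$, so $H$ is the signature of the matchgate just built. The only real friction in writing this out is bookkeeping --- checking that Theorem~\ref{thm:base graph} genuinely applies at an arbitrary support point $X$ (handled by the bit-flip reduction), noting that $F$ is nonzero at $X$ so that Theorem~\ref{thm: unique} applies to it, and treating $H\equiv 0$ separately --- since all the substantive content, namely the inductive propagation by which $U_2$ forces $U_4$, then $U_6$, and so on through identities each isolating one new value, already lives inside Theorems~\ref{thm: unique} and~\ref{thm:base graph}.
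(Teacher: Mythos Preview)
Your proposal is correct and follows exactly the paper's own argument: invoke Theorem~\ref{thm:base graph} at a support point to build a matchgate $F$ agreeing with $H$ on the core $U_2(X)$, then use Theorem~\ref{thm: unique} (together with Theorem~\ref{prop:kehua1} to know $F$ satisfies the identities) to conclude $F=H$. The paper's proof is a two-line sketch of precisely this; you have simply filled in the bookkeeping it omits, including the explicit zero case and the bit-flip reduction to $X=\mathbf{1}$.
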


When $H$ is not zero function, we apply Theorem \ref{thm:base graph} with the values of $H$ on a core, to get a matchgate $F$. By Theorem  \ref{thm: unique}, $F$ is the same as $H$.

\section{Elementary matchgate transformation} \label{sec ele}

Given a matchgate $\Gamma$ with external edges $e_1,e_2,\ldots, e_{s+t}$. Assume vertex $i$ is the endpoint of $e_i$, and they are all distinct w.l.o.g.. Its function $F$ has domain $[2]^{s+t}$. We also look $F$ as a matrix $(F_{I,J})$ with row index $I=e_1e_2\cdots e_{s} \in [2]^s$ and column index $J= e_{s+t}e_{s+t-1}\cdots e_{s+1} \in [2]^t$.

We define five kinds of matchgate transformations called flip, global factor, exchange, bar and slash. Each of them adds a small new part $\Delta$ connected to one or two of the dangling edges $e_1,e_2,\ldots, e_{s}$. The new matchgate $\Theta$ is a composition of two matchgates $\Gamma$ and $\Delta$.

All the transformations are invertible. Our purpose is to simplify the underlying graph to a canonical form. The first two transformations help to reach a matchgate admitting an underlying graph \cite{CC1}.
For the last three transformations, we need to analyze their affections on underlying graphs.

\subsection{Flip}

The flip applied to an external edge $e_i$ for some $i \in \{1,2,\ldots,s\}$, connects a new vertex $i'$ together with its dangling edge $e_{1'}$ to $e_i$.

\begin{figure}[hbtp]
	\begin{center}
		\includegraphics[width=0.3\textwidth]{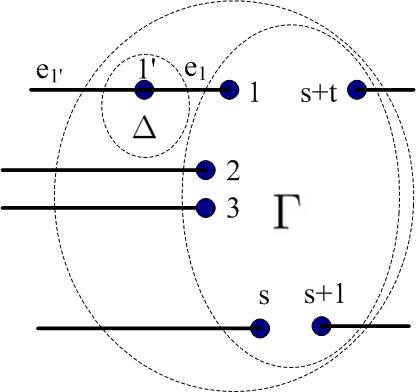}
	\caption{Matchgate $\Theta$.}
	\label{Fig flip}
	\end{center}
\end{figure}

For an example $i=1$, see Figure \ref{Fig flip}.

\[F_\Theta (e_{1'}=1-e_1, e_{2}, \ldots, e_{s+t})=F_\Gamma (e_1, e_2, \ldots, e_{s+t}).\]

Because $\left ( \begin{array}{cc} 0 & 1  \\  1  & 0 \\ \end{array}  \right ) $ is the function of $\Delta$, we have an equation in martix form \[F_\Theta=( \left ( \begin{array}{cc} 0 & 1  \\  1  & 0 \\ \end{array}  \right ) \otimes E_2^{\otimes (s-1)})  F_\Gamma.\]

It is not hard to get the following proposition.

\begin{proposition}
The inverse of a flip is itself. Flip keeps the rank of matchgate function.
\end{proposition}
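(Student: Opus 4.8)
The claim has two parts: that a flip is its own inverse, and that it preserves the rank of the matchgate function. For the first part, the plan is to observe that applying the flip to $e_i$ twice attaches two successive degree-two vertices to $e_i$, each carrying the swap function $\left(\begin{smallmatrix} 0 & 1 \\ 1 & 0 \end{smallmatrix}\right)$; composing these two along the internal edge between them (using the associative law for gadget composition stated in the Gadget subsection) multiplies the two swap matrices, and since $\left(\begin{smallmatrix} 0 & 1 \\ 1 & 0 \end{smallmatrix}\right)^2 = E_2$, the net effect is to attach a degree-two vertex carrying the identity, which leaves the signature unchanged. In matrix form this is simply $\bigl(\left(\begin{smallmatrix} 0 & 1 \\ 1 & 0 \end{smallmatrix}\right)\otimes E_2^{\otimes(s-1)}\bigr)^2 F_\Gamma = F_\Gamma$.

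For the rank statement, I would use the matrix identity already displayed just before the proposition, namely $F_\Theta = \bigl(\left(\begin{smallmatrix} 0 & 1 \\ 1 & 0 \end{smallmatrix}\right)\otimes E_2^{\otimes(s-1)}\bigr) F_\Gamma$ (up to reindexing the $1'$ coordinate, which is a harmless relabeling of a row index and does not affect rank). The left multiplier $\left(\begin{smallmatrix} 0 & 1 \\ 1 & 0 \end{smallmatrix}\right)\otimes E_2^{\otimes(s-1)}$ is a Kronecker product of invertible matrices, hence invertible (it is in fact a permutation matrix on the row index set $[2]^s$). Left-multiplication of the matrix $(F_{I,J})$ by an invertible matrix does not change its rank, so $\mathrm{rank}(F_\Theta) = \mathrm{rank}(F_\Gamma)$. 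One should note here that ``rank of the matchgate function'' refers to the rank of $F$ viewed as the matrix $(F_{I,J})$ with the fixed row/column split $I \in [2]^s$, $J \in [2]^t$ fixed in this section; since the flip acts only on edges among $e_1,\dots,e_s$, it touches only the row index, and the column structure is untouched.

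The only mild subtlety, and the step I would be most careful about, is bookkeeping the index relabeling: the flip replaces $e_1$ by $e_{1'} = 1 - e_1$, so strictly speaking $F_\Theta$ and $F_\Gamma$ are matrices indexed by nominally different coordinate sets. I would make explicit that this renaming is exactly the action of the permutation matrix $\left(\begin{smallmatrix} 0 & 1 \\ 1 & 0 \end{smallmatrix}\right)$ on the first row coordinate, so there is no genuine ambiguity, and no loss of generality in having placed the flip at $e_1$ rather than a general $e_i$ (the argument is identical, with the swap matrix in the $i$-th tensor slot). Everything else is immediate from the displayed matrix equation and the elementary fact that multiplication by an invertible matrix preserves rank.
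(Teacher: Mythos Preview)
Your proposal is correct and is exactly the argument the paper has in mind: the paper omits the proof (writing only ``It is not hard to get the following proposition''), and you have filled in precisely the obvious details from the displayed matrix identity $F_\Theta = \bigl(\left(\begin{smallmatrix} 0 & 1 \\ 1 & 0 \end{smallmatrix}\right)\otimes E_2^{\otimes(s-1)}\bigr) F_\Gamma$, namely that the left factor squares to the identity and is invertible.
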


\subsection{Global factor}

Global factor transforms $F_\Gamma$ to $cF_\Gamma$. We add two new nodes $n+1,n+2$ and a new edge $(n+1,n+2)$ of weight $c$ to $\Gamma$ to get a new matchgate $\Theta$. Obviously, the function of $\Theta$ is $c \Gamma$. Constant $c$ is always nonzero, when used in a global factor.
\begin{proposition}
The inverse of a global factor $c$ transformation is the global factor $1/c$ transformation. The global factor transformation keeps the rank of matchgate function.
\end{proposition}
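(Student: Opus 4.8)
The plan is to read both claims straight off the definition of the global factor transformation, working with the matchgate signature $F_\Gamma$ in its matrix form $(F_{I,J})$ as set up at the start of Section~\ref{sec ele}.

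First I would record cleanly why a single global factor transformation has the stated effect on the signature. The transformation attaches a separate component consisting of the two new vertices $n+1,n+2$ joined by the edge $(n+1,n+2)$ of weight $c$, and this component carries no dangling edge. In any perfect matching of the enlarged graph, the vertex $n+1$ has $(n+1,n+2)$ as its only incident edge, so $(n+1,n+2)$ must be in the matching; conversely every perfect matching of $\Gamma$ extends uniquely to one of $\Theta$ by adjoining this edge. Since the weight of a perfect matching is the product of its edge weights, each term of the sum defining $F_\Theta$ acquires exactly one factor $c$, so $F_\Theta = c F_\Gamma$ entrywise, i.e.\ $F_\Theta = c F_\Gamma$ as matrices. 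This is already asserted in the text, so it is just a matter of stating it.

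For the inverse claim I would compose: applying the global factor $c$ transformation to $\Gamma$ gives a matchgate with signature $c F_\Gamma$, and then applying the global factor $1/c$ transformation gives one with signature $(1/c)(c F_\Gamma) = F_\Gamma$, and symmetrically if the two are applied in the other order. Both steps are legitimate global factor transformations because $c \neq 0$ forces $1/c \neq 0$; hence the composition is the identity on signatures. For the rank claim, $F_\Theta = c F_\Gamma$ with $c \neq 0$, and scaling a matrix by a nonzero constant leaves its row space, hence its rank, unchanged, so $\mathrm{rank}(F_\Theta) = \mathrm{rank}(F_\Gamma)$.

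There is essentially no obstacle here; the only point needing even a moment's care is the combinatorial observation that the added weighted edge is forced into every perfect matching of $\Theta$, which is precisely what makes the factor $c$ appear multiplicatively and lets the scalar matrix argument go through verbatim.
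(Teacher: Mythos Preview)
Your proposal is correct and matches the paper's approach: the paper states this proposition without proof, taking it as immediate from the definition (``Obviously, the function of $\Theta$ is $c\Gamma$'' and ``Constant $c$ is always nonzero''). Your argument simply spells out the two-line justification the paper leaves implicit, namely that the isolated weighted edge is forced into every perfect matching so $F_\Theta = cF_\Gamma$, whence the inverse is global factor $1/c$ and rank is preserved by nonzero scalar multiplication.
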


\subsection{Getting the underlying graph}

Flip and global factor are used in the proof of Theorem \ref{thm:base graph}, to get the transformation between the general case and the special case that admits a underlying graph.

Assume $F_\Delta(I=e_1e_2\cdots e_s,J) \neq 0$. For each $e_i=0$, we apply a flip on $e_i$.
We get a new matchgate $\Theta_1$, and $F_{\Theta_1}(\mathbf{1},J)=F_\Delta(I,J)$.
We apply some proper flips to the external edges on the right side similarly, working as column transformations. We get a  matchgate  $\Theta_2$, and $F_{\Theta_2}(\mathbf{1},\mathbf{1})=F_\Delta(I,J)$.
Then, we apply a global factor $1/F_{\Theta_2}(\mathbf{1},\mathbf{1})$.

We get a matchgate $\Omega$ satisfying $F_\Omega(\mathbf{1})=1$. By the proof sketch for the special case of Theorem \ref{thm:base graph}, function $F_\Omega$ admits an underlying graph $G$.

We prove Theorem \ref{thm:base graph} for $f=F_\Delta$. $F_\Delta$ shows only its values on the core part. We look $\Delta$ as a black box. By the above invertible transformations, we get $A F_\Delta B= F_\Omega$.
Because $F_\Omega(\mathbf{1})=1$, we can construct a matchgate $\Gamma_G$ introduced by $G$, such that $\Gamma_G$ respects the values on the core part. By Theorem \ref{thm: unique}, $F_{\Gamma_G}=F_\Omega$. Now, we can solve the unknown $F_\Delta$ in the equation  $A F_\Delta B=F_{\Gamma_G}$. $A^{-1} \Gamma_G B^{-1}$ gives a construction for $F_\Delta$.


\subsection{Exchange}

An exchange applied to $e_i$ and $e_{i+1}$ of $\Gamma$, adds to $\Gamma$ a sign crossover matchgate applied on $e_{i'},e_{(i+1)'},e_{i+1},e_i$, where $e_{(i+1)'},e_{i'}$ are external edges of the new matchgate $\Theta$. An example of $i=1$ is shown in Figure \ref{Fig exchage}.

Recall the matrix form of sign crossover function $C$. We have

\[F_\Theta= (\left ( \begin{array}{cccc} 1 & 0 & 0 & 0  \\ 0 & 0 & 1 & 0 \\ 0 & 1 & 0 & 0  \\ 0 & 0 & 0 & -1  \end{array} \right ) \otimes E_2^{\otimes (s-2)})F_\Gamma.\]

\[F_\Theta (e_{1'}=e_2, e_{2'}=e_1, \ldots, e_{s+t})=C(e_2,e_1,e_2,e_1) F_\Gamma (e_1, e_2, \ldots, e_{s+t}).\]

\begin{figure}[hbtp]
	\begin{center}
		\includegraphics[width=0.35\textwidth]{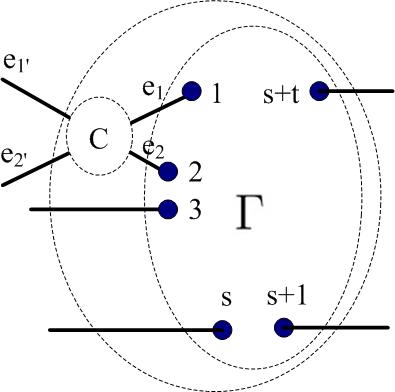}
	\caption{Matchgate $\Theta$.}
	\label{Fig exchage}
	\end{center}
\end{figure}

\begin{proposition}
The inverse of a $i,i+1$ exchange is itself. Exchange keeps the rank of matchgate function.
\end{proposition}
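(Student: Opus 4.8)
The inverse of an $i,i+1$ exchange is itself, and exchange keeps the rank of the matchgate function.

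\textbf{Plan of proof.} The plan is to read everything off the matrix identity already displayed just above the statement, namely
\[F_\Theta = \left(\left ( \begin{array}{cccc} 1 & 0 & 0 & 0  \\ 0 & 0 & 1 & 0 \\ 0 & 1 & 0 & 0  \\ 0 & 0 & 0 & -1  \end{array} \right ) \otimes E_2^{\otimes (s-2)}\right) F_\Gamma,\]
so the exchange is nothing but left multiplication of the matrix form of $F_\Gamma$ by the fixed matrix $P := (C_{x_1x_2,x_4x_3}) \otimes E_2^{\otimes(s-2)}$, where $E_2$ is the $2\times 2$ identity. Both claims are then linear-algebra facts about $P$.

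\textbf{Step 1: $P$ is an involution.} The $4\times 4$ block $(C_{x_1x_2,x_4x_3})$ swaps the two middle basis vectors $e_{01},e_{10}$ and negates $e_{11}$; squaring it gives the $4\times 4$ identity, since a transposition squares to the identity and $(-1)^2=1$. Hence $P^2 = \big((C_{x_1x_2,x_4x_3})\otimes E_2^{\otimes(s-2)}\big)^2 = (C_{x_1x_2,x_4x_3})^2 \otimes E_2^{\otimes(s-2)} = E_2^{\otimes s}$, using $(A\otimes B)(A'\otimes B') = AA'\otimes BB'$. Applying the $i,i+1$ exchange twice therefore composes the matchgate with $P^2 = $ identity on the relevant legs, which returns the original signature $F_\Gamma$; in gadget terms, two stacked sign-crossover boxes on $e_{i'},e_{(i+1)'},e_{i+1},e_i$ contract to plain wires. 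So the exchange is its own inverse. (Here I also want to note that this is consistent with the pointwise description already given: applying the swap-of-arguments $e_{1'}=e_2,\,e_{2'}=e_1$ twice restores the original argument order, and the scalar factor $C(e_2,e_1,e_2,e_1)$ incurred is $\pm 1$ and is incurred twice, so it squares to $1$.)

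\textbf{Step 2: rank is preserved.} Since $\det(C_{x_1x_2,x_4x_3}) = -1 \neq 0$, the $4\times 4$ block is invertible, and $\det P = (\det C)^{2^{s-2}} = \pm 1 \neq 0$, so $P$ is an invertible $2^s \times 2^s$ matrix. Left multiplication of the matrix form $(F_{I,J})$ by the invertible matrix $P$ does not change its rank, hence $\mathrm{rank}(F_\Theta) = \mathrm{rank}(P F_\Gamma) = \mathrm{rank}(F_\Gamma)$. (In fact $P$ is orthogonal, being a signed permutation matrix, but invertibility is all that is needed.)

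\textbf{Anticipated obstacle.} There is essentially no hard step; the only things to be careful about are bookkeeping conventions. First, one must check that the exchange acts by left multiplication on the row side only (the $I = e_1\cdots e_s$ index), i.e. it genuinely affects only two adjacent input legs and acts as identity on the remaining $s-2$ input legs and on all $t$ output legs — this is exactly what the displayed $E_2^{\otimes(s-2)}$ tensor factor encodes, together with the fact that the $t$ output legs are untouched. Second, one must make sure the convention in which the sign-crossover box is wired ($e_{i'},e_{(i+1)'},e_{i+1},e_i$, matching the matrix $(C_{x_1x_2,x_4x_3})$ rather than its transpose) is the one used consistently, so that the stated equation and the involution claim agree; but this was already fixed in the text just before the statement, so nothing new is required.
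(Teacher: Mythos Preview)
Your proof is correct and is exactly the argument the paper has in mind: the proposition is stated without proof in the paper, immediately after the matrix identity $F_\Theta = (C_{x_1x_2,x_4x_3}\otimes E_2^{\otimes(s-2)})F_\Gamma$, and the intended justification is precisely the linear-algebra check you carry out (the $4\times4$ block squares to the identity, so the transformation is an involution, and it is invertible, so rank is preserved).
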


\subsection{Exchange and base graph}

Assume an exchange on $i,i+1$ is applied to a matchgate $\Gamma$ which admits an underlying graph.
Because $F_\Theta (\mathbf{1})=C(\mathbf{1}) F_\Gamma (\mathbf{1})=-F_\Gamma (\mathbf{1})$, we go on to apply a global factor $-1$ to $\Theta$ to get $\Omega$. Then $\Omega$ admits an underlying graph.

It's obviously, after a $i,i+1$ vertex rename, all corresponding edges in the underlying graphs of $\Omega$ and $\Gamma$ have weights of the same absolute value.

\subsection{Bar}

Suppose we have a matchgate $\Gamma$ and its underlying graph $G$. The edge $(i,i+1)$ has weight $w$.
A weight $-w$ bar applied to $e_i$ and $e_{i+1}$ of $\Gamma$ connects a matchgate $\Delta$ as shown in Figure \ref{Fig bar} to $e_i$ and $e_{i+1}$.

\begin{figure}[hbtp]
	\begin{center}
		\includegraphics[width=0.4\textwidth]{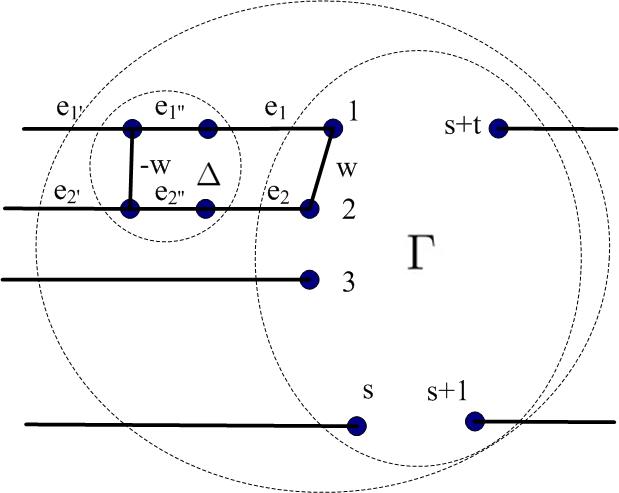}
	\caption{Matchgate $\Theta$.}
	\label{Fig bar}
	\end{center}
\end{figure}

By definition,
\[F_\Theta (e_{1'}, e_{2'}, e_3, \ldots, e_{s+t})=\sum_{e_1,e_2}F_\Delta(e_{2'},e_{1'},e_1,e_2) F_\Gamma (e_1, e_2, e_3, \ldots, e_{s+t}). \]

\[F_\Theta= (\left ( \begin{array}{cccc} 1 & 0 & 0 & -w  \\ 0 & 1 & 0 & 0 \\ 0 & 0 & 1 & 0  \\ 0 & 0 & 0 & 1  \end{array} \right ) \otimes E_2^{\otimes (s-2)})F_\Gamma.\]

\begin{proposition}
The inverse of a weight $-w$ bar applied to $e_i$ and $e_{i+1}$ is a weight $w$ bar applied to $e_i$ and $e_{i+1}$. Bar keeps the rank of matchgate function.
\end{proposition}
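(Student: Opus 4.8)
Here is the plan. The whole statement will follow from the matrix identity recorded just above the proposition, $F_\Theta = (B_{-w} \otimes E_2^{\otimes (s-2)})\,F_\Gamma$, where $B_{-w}$ denotes the $4\times 4$ matrix equal to the identity $E_2^{\otimes 2}$ except for the entry $-w$ in its top-right corner; equivalently $B_{-w} = E_2^{\otimes 2} - w\,N$, with $N$ the matrix unit carrying a single $1$ in position $(1,4)$. I would first note that $N^2 = 0$, so $B_{-w}$ is invertible with inverse $B_{-w}^{-1} = E_2^{\otimes 2} + w\,N = B_{w}$, and that $B_{w}$ is exactly the matrix attached by a \emph{weight $w$ bar} applied to the same two external edges. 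Composing a weight $-w$ bar on $e_i,e_{i+1}$ with a weight $w$ bar on the (renamed) same pair therefore left-multiplies $F_\Gamma$ by $(B_w \otimes E_2^{\otimes (s-2)})(B_{-w} \otimes E_2^{\otimes (s-2)}) = (B_w B_{-w}) \otimes E_2^{\otimes (s-2)} = E_2^{\otimes s}$, i.e.\ recovers $F_\Gamma$; the other order gives $B_{-w}B_{w} = E_2^{\otimes 2}$ just as well. This establishes that the inverse of a weight $-w$ bar on $e_i,e_{i+1}$ is a weight $w$ bar on $e_i,e_{i+1}$.

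For the rank claim, I would regard $F_\Gamma$ as the matrix $(F_{I,J})$ with row index $I \in [2]^s$ and column index $J \in [2]^t$, as at the start of this section. A bar transformation left-multiplies this matrix by $B_{\pm w} \otimes E_2^{\otimes (s-2)}$, which is a tensor product of invertible matrices and hence an invertible $2^s\times 2^s$ matrix; left-multiplication by an invertible matrix leaves matrix rank unchanged, so $\mathrm{rank}\,F_\Theta = \mathrm{rank}\,F_\Gamma$. The same one-line argument in fact reproves the rank statements already made for flip and exchange, since the $2\times2$ swap matrix and the sign-crossover matrix $C$ are invertible.

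The only point that needs genuine care — and hence the main, though modest, obstacle — is the bookkeeping of external-edge relabelings. After the first bar the edges $e_i,e_{i+1}$ have become $e_{i'},e_{(i+1)'}$, and the defining formula for the attached gadget reads its inputs in the swapped order $e_{(i+1)'},e_{i'}$; I would verify that, with the conventions of Figure~\ref{Fig bar}, a weight $w$ bar applied to this renamed pair again amounts to left-multiplication by $B_w$ on the same two coordinate slots, so that the two built-in swaps cancel and the matrix computation above is valid. I would also record that the bar gadget is a legitimate matchgate for every value of its weight parameter, so that $\Theta$ is again a matchgate and the inverse operation is itself a bar of the prescribed form, and that a weight $0$ bar — although not literally the empty gadget — realizes the identity transformation on functions, which is all that is used. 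None of this presents any real difficulty once Figure~\ref{Fig bar} is fixed.
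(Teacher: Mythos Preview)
Your proposal is correct and is exactly the argument the paper has in mind: the proposition in the paper is stated without proof, but the matrix identity $F_\Theta = (B_{-w}\otimes E_2^{\otimes(s-2)})F_\Gamma$ displayed just before it is clearly intended to carry the whole weight, and your computation $B_{-w}^{-1}=B_w$ together with invariance of rank under left-multiplication by an invertible matrix is precisely the intended justification. Your remarks on edge relabeling and on the bar gadget being a matchgate for every weight are appropriate due diligence but, as you say, routine.
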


\subsection{Bar and base graph}

Because $\Delta$ forces $e_1=e_2=1$ with value $1$ when $e_{1'}=e_{2'}=1$, $F_{\Theta}(\mathbf{1})=1$ and $\Theta$ admits an underlying graph. We wonder the underlying graph $G'(V,E,W')$ of $\Theta$.

We observe that $\Delta$ forces $e_1=e_{1'}$ and $e_2=e_{2'}$ in most cases, except that $e_{1'}=e_{2'}=0$.

Hence, $W'(i,j)=W(i,j)$ holds for all $i,j \in [s+t]$ except $i=1,j=2$.

\[W'(1,2)=F_{\Theta}(0,0,1,\ldots,1)=\sum_{e_1,e_2}F_\Delta(0,0,e_2,e_1) F_\Gamma (e_1, e_2, 1,\ldots,1).\]

Check Figure \ref{Fig bar}, we find that when $e_{1'}=e_{2'}=0$, either we use $e_{1''}$ and $e_{2''}$ to match the left two vertices in $\Delta$ and $e_1=e_2=0$, or we use $-w$ to match them and $e_1=e_2=1$.  $W'(1,2)=1 \cdot w + (-w) \cdot 1=0$.

\begin{proposition}
A weight $-W(i,i+1)$ bar applied to $e_i$ and $e_{i+1}$ in a matchgate $\Gamma$ introduced by the underlying graph $G(V,E,W)$, gives a new matchgate $\Theta$ with almost the same underlying graph, except that $W(i,i+1)$ becomes $0$.
\end{proposition}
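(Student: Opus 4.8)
The plan is to verify, directly from the matrix identity for the bar recorded above, the two assertions packed into the statement: first, that $\Theta$ again admits an underlying graph, and second, that its weight matrix $W'$ coincides with $W$ on every pair of vertices other than $\{i,i+1\}$, on which it equals $0$. As above I treat the illustrative case $i=1$; the general case differs only by renaming the external edges involved, and we identify the new external edges $e_{1'},e_{2'}$ of $\Theta$ with the old $e_1,e_2$ of $\Gamma$ under the obvious renaming of vertices in the underlying graph.

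First I would evaluate $F_\Theta$ on the all-ones input using the recorded matrix form $F_\Theta=(\,\cdot\,\otimes E_2^{\otimes(s-2)})F_\Gamma$, where the $4\times 4$ matrix of $\Delta$ acts on the coordinates $e_{1'}e_{2'}$: its row indexed by the configuration $11$ is $(0,0,0,1)$, so $F_\Theta(\mathbf{1})=F_\Gamma(\mathbf{1})=1$. Exactly as for the matchgate $\Omega$ built above, this lets me invoke the special case behind Theorem~\ref{thm:base graph}: $F_\Theta$ is the signature of the matchgate introduced by a weighted graph $G'(V,E,W')$, with $W'(j,k)=F_\Theta(Z_{j,k})$ for every pair $j\neq k$, where $Z_{j,k}$ denotes the input that is $0$ exactly at positions $j$ and $k$ and $1$ elsewhere.

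Second I would compute these $W'(j,k)$. The one fact needed is the transparency of $\Delta$ already observed: the $4\times 4$ matrix fixes the components $F_\Gamma|_{e_1e_2=01},F_\Gamma|_{e_1e_2=10},F_\Gamma|_{e_1e_2=11}$ and sends $F_\Gamma|_{e_1e_2=00}$ to $F_\Gamma|_{e_1e_2=00}-wF_\Gamma|_{e_1e_2=11}$. Now $Z_{j,k}$ has $e_{1'}=e_{2'}=0$ iff $\{j,k\}=\{1,2\}$; in every other case at least one of $e_{1'},e_{2'}$ is $1$, the relevant row of the matrix is a standard basis vector, and $F_\Theta(Z_{j,k})=F_\Gamma(Z_{j,k})=W(j,k)$, so $W'(j,k)=W(j,k)$. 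For $\{j,k\}=\{1,2\}$ all remaining coordinates are $1$, so $F_\Theta(Z_{1,2})=F_\Gamma(Z_{1,2})-wF_\Gamma(\mathbf{1})=W(1,2)-w$, which is $0$ since $w=W(1,2)$; hence $W'(1,2)=0$, which is the claim.

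There is no genuine obstacle here: the proposition is a clean corollary of the matrix form of the bar together with the transparency of $\Delta$ off the $00$ state, both already in hand, and indeed it merely formalizes the preceding discussion. The only thing demanding mild care is the bookkeeping — matching each configuration $e_{1'}e_{2'}$ of $\Theta$ to the correct unmatched pair of vertices of $G'$, and observing that the swap in $F_\Delta(e_{2'},e_{1'},e_1,e_2)$ is irrelevant to the weight computation because the only configurations that enter it, namely $00$ and $11$, are symmetric in the first two arguments.
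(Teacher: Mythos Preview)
Your proposal is correct and takes essentially the same approach as the paper: both verify $F_\Theta(\mathbf{1})=1$, then read off $W'(j,k)$ from the values $F_\Theta(Z_{j,k})$, using that $\Delta$ is transparent except on the $00$ state. The only cosmetic difference is that you work from the recorded $4\times4$ matrix of $\Delta$, while the paper argues the same facts combinatorially from the perfect-matching picture of $\Delta$; the content is identical.
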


\subsection{Slash}

Suppose matchgate $\Gamma$ has a bipartite underlying graph $G([s],[t],E,W)$. W.l.o.g., we show a slash applied to $e_1$ and $e_2$. Suppose $W(1,s+t)=w_1 \neq 0$ and $W(2,s+t)=w_2$. This weight $-w_2/w_1$ slash connects a matchgate $\Delta$ as shown Figure \ref{Fig slash} to $\Gamma$ to construct a new matchgate $\Theta$.
The edge $(2',1'')$ has weight $-w_2/w_1$. The other two internal and four external edges has the default weight $1$.

\begin{figure}[hbtp]
	\begin{center}
		\includegraphics[width=0.4\textwidth]{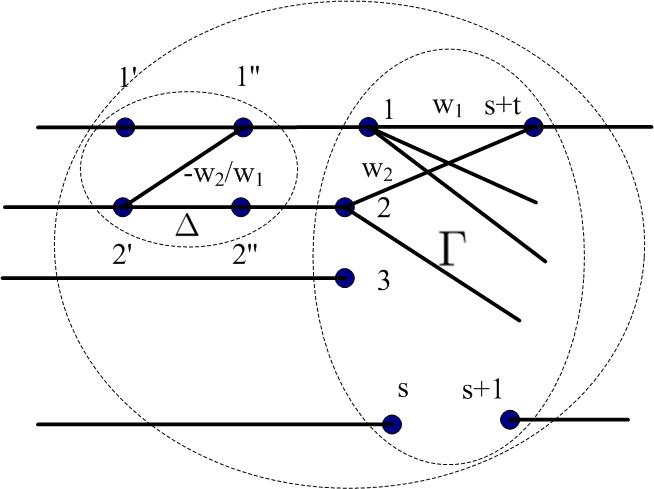}
	\caption{Matchgate $\Theta$.}
	\label{Fig slash}
	\end{center}
\end{figure}

By definition,
\[F_\Theta (e_{1'}, e_{2'}, \ldots, e_{s+t})=\sum_{e_1,e_2}F_\Delta(e_{1'},e_{2'},e_2,e_1) F_\Gamma (e_1, e_2, \ldots, e_{s+t}). \]

\[F_\Theta= (\left ( \begin{array}{cccc} 1 & 0 & 0 & 0  \\ 0 & 1 & 0 & 0 \\ 0 & -w_2/w_1 & 1 & 0  \\ 0 & 0 & 0 & 1  \end{array} \right ) \otimes E_2^{\otimes (s-2)})F_\Gamma.\]

\begin{proposition}
The inverse of a weight $-w$ slash applied to $e_i$ and $e_{i+1}$ is a weight $w$ slash applied to $e_i$ and $e_{i+1}$. Slash keeps the rank of matchgate function.
\end{proposition}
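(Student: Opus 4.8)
The statement to prove is the last Proposition: that the inverse of a weight $-w$ slash applied to $e_i$ and $e_{i+1}$ is the weight $w$ slash applied to the same pair, and that a slash preserves the rank of the matchgate function. The plan is to argue entirely at the level of the matrix form of the transformation, exactly as was done for flip, exchange and bar. The preceding discussion already records that the slash on $e_1,e_2$ acts on $F_\Gamma$ by left multiplication by $T(-w_2/w_1)\otimes E_2^{\otimes(s-2)}$, where I write
\[
T(\alpha)=\left(\begin{array}{cccc} 1 & 0 & 0 & 0 \\ 0 & 1 & 0 & 0 \\ 0 & \alpha & 1 & 0 \\ 0 & 0 & 0 & 1 \end{array}\right),
\]
and $\alpha = -w_2/w_1$ is exactly the weight of the slash up to sign convention (the ``weight $-w$ slash'' has $\alpha=-w$ in the notation of the Proposition). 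The whole Proposition reduces to two facts about $T(\alpha)$: first, $T(\alpha)$ is invertible with $T(\alpha)^{-1}=T(-\alpha)$; second, $T(\alpha)$ has full rank $4$, so left multiplication by $T(\alpha)\otimes E_2^{\otimes(s-2)}$ does not change the rank of $F_\Gamma$ viewed as a matrix.

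First I would verify $T(\alpha)T(-\alpha)=I_4$ by direct multiplication — $T(\alpha)$ is lower triangular with unit diagonal and a single off-diagonal entry $\alpha$ in the $(3,2)$ position, so it is an elementary (transvection) matrix and its inverse is obtained by negating that entry. Tensoring with the identity preserves this: $(T(\alpha)\otimes E_2^{\otimes(s-2)})(T(-\alpha)\otimes E_2^{\otimes(s-2)}) = (T(\alpha)T(-\alpha))\otimes E_2^{\otimes(s-2)} = I_{2^s}$. So the composition of the weight $-w$ slash and the weight $w$ slash on $e_i,e_{i+1}$ is the identity transformation on the signature, which is the first claim. I should also note, to be fully rigorous, that the inverse operation is itself realizable as a slash gadget — but since slash is \emph{defined} as the composition with the gadget $\Delta$ and the only data is the edge weight, the weight $w$ slash gadget is a legitimate matchgate and composing the two gadgets yields (after the internal edges are summed out) a trivial gadget, matching the matrix computation. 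For the rank claim: since $T(\alpha)$ is invertible, so is $T(\alpha)\otimes E_2^{\otimes(s-2)}$, and multiplying the matrix $F_\Gamma$ on the left by an invertible matrix preserves its rank.

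The only subtlety — and the step I expect to require the most care, though it is not deep — is bookkeeping about which index block the $4\times 4$ matrix $T(\alpha)$ acts on when the slash is applied to a general pair $e_i,e_{i+1}$ rather than $e_1,e_2$, and the orientation/permutation of the four ports $(e_{1'},e_{2'},e_2,e_1)$ fed to $\Delta$ in the planar embedding. For the generic pair one conjugates $T(\alpha)\otimes E$ by the permutation that moves coordinates $i,i+1$ to the front, which is again invertible and rank-preserving, so nothing changes. I would also double-check that $w_1=W(i,s+t)\neq 0$ is exactly the hypothesis under which the slash was defined, so the quantity $\alpha=-w_2/w_1$ is well defined; this is already assumed in the setup, so no extra work is needed. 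Everything else is the same three-line argument used for the earlier transformations, and I would present it in that parallel style.
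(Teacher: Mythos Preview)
Your proposal is correct and is exactly the argument the paper intends: the paper states this proposition (like the analogous ones for flip, exchange, and bar) without proof, immediately after displaying the matrix form $T(\alpha)\otimes E_2^{\otimes(s-2)}$ of the transformation, leaving the reader to observe that $T(\alpha)$ is an invertible elementary matrix with $T(\alpha)^{-1}=T(-\alpha)$. Your write-up simply makes this implicit step explicit, including the tensor bookkeeping for a general pair $e_i,e_{i+1}$, so nothing further is needed.
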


\subsection{Slash and base graph}

Slash can be analyzed similarly as bar, but more complicated, since it affects no only one edge weight in the underlying graph. We give a high level relationship about two multiplied bipartite underlying graphs and their multiplied matchgate functions, and use it to analyze slash. A bipartite underlying graph  $G([s],[t],E,W)$ can be expressed by its weight matrix $W=(W(i,j))$, $i \in [s]$, $j \in [t]$ in the standard way.

\begin{theorem} \label{thm: same product}
Suppose $G_1([s],[t],E_1)$ and  $G_1([t],[p],E_2)$  are two bipartite graphs with weight matrix $W_1$ and $W_2$ respectively. They introduce two matchgates $\Gamma_1$ and $\Gamma_2$, which define two functions  $F_{\Gamma_1}$ and $F_{\Gamma_2}$ respectively, as matrices of size $2^s \times 2^t$ and $2^t \times 2^p$. Then, the matchgate introduced by the bipartite graph with weight matrix $W_1W_2$, has function  $F_{\Gamma_1}F_{\Gamma_2}$.
\end{theorem}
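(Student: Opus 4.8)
The plan is to reduce the statement about matchgate functions to a statement about Pfaffians of a single large skew-symmetric matrix, so that a single identity at the level of matrices does the work. Recall from the excerpt that the matchgate $\Gamma_G$ introduced by a weighted graph $G$ computes $\Gamma_G(X)=\mathrm{Pfaffian}$ of the subgraph obtained by deleting the vertices matched by the external edges that are set to $1$; in particular, when $X$ selects exactly two vertices $i,j$ to remain unmatched we get the weight $W(i,j)$. I would first set up a single ambient skew-symmetric matrix: take the disjoint union $G_1 \sqcup G_2$ of the two bipartite underlying graphs (on vertex sets $[s]\cup[t]$ and $[t]\cup[p]$ respectively), and think of the shared class $[t]$ as giving a ``cut'' through which we will contract. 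The composite matchgate $\Gamma_1\Gamma_2$ is, by the definition of gadget composition and the Holant/associativity machinery already in the excerpt, exactly the matchgate obtained by gluing the $t$ right-dangling edges of $\Gamma_1$ to the $t$ left-dangling edges of $\Gamma_2$ and summing over those $[2]^t$ internal assignments.

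The core step is then to show that this glued matchgate is introduced by the bipartite graph with weight matrix $W_1W_2$. I would do this by computing its core, i.e. its values on $U_2(\mathbf 1)$, and invoking Theorem \ref{thm: unique}: two matchgate functions agreeing on a core of a nonzero point agree everywhere. So it suffices to check (i) that the glued function is nonzero at $\mathbf 1$ (both $F_{\Gamma_1}$ and $F_{\Gamma_2}$ are, since each underlying graph gives value $1$ there, and the gluing sum at $\mathbf 1$ over the $[2]^t$ middle assignments should pick out a single surviving term — the all-ones middle assignment being matched through — giving value $1$), and (ii) that on $Z_{i,j}$ the glued function equals the $(i,j)$ entry of $W_1W_2$. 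For (ii) the combinatorial content is: leaving vertices $i\in[s]$ and $j\in[p]$ unmatched on the outside forces, in each term of the gluing sum, exactly one middle vertex $k\in[t]$ to be ``threaded'' (matched on the $\Gamma_1$ side to $i$ and on the $\Gamma_2$ side to $j$ through the glued edge), contributing $W_1(i,k)\,W_2(k,j)$; summing over $k$ yields $(W_1W_2)(i,j)$. I would phrase this precisely using the Pfaffian-of-a-subgraph description of $\Gamma_G$ and the fact that a perfect matching of the glued graph restricts to near-perfect matchings of the two sides sharing the threaded vertex.

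The main obstacle I expect is bookkeeping the \emph{signs}. Pfaffians carry a sign depending on the parity of the permutation of matched pairs, and when we glue two matchgates the crossing structure of the glued edges (and the order in which external edges appear around the outer face) can introduce sign discrepancies — this is exactly the subtlety the sign crossover function $C$ is designed to absorb elsewhere in the paper. I would handle it by being careful that $G_1$ and $G_2$ are \emph{bipartite} with the shared class $[t]$ ordered consistently on both sides, so that the threaded edges can be drawn without crossings, and then argue that the parity of the matching of the glued graph is the product (sum mod $2$) of the parities on the two sides; the bipartite hypothesis is what makes this clean and is presumably why the theorem is stated for bipartite underlying graphs. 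An alternative, perhaps safer route is to avoid the combinatorial sign argument entirely: express everything through the explicit $2\times 2$ (or $2^s\times 2^t$) matrix descriptions of the elementary pieces, note that $F_{\Gamma_1}$ and $F_{\Gamma_2}$ are built from flips/exchanges/bars/slashes applied to a diagonal ``base'' matchgate as in Section \ref{sec cnanonical form}, and check multiplicativity on those building blocks, using that tensor product and matrix multiplication interleave correctly (the $(A\otimes E)(B\otimes E)=(AB)\otimes E$ type identities). I would present the Pfaffian/core argument as the main proof and mention the building-block check as a sanity cross-check.
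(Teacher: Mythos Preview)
Your proposal is essentially the paper's own proof: connect the $t$ output edges of $\Gamma_1$ to the $t$ input edges of $\Gamma_2$, observe that the composite function is the matrix product $F_{\Gamma_1}F_{\Gamma_2}$, check that it takes value $1$ at $\mathbf{1}$ (so admits an underlying graph), and then compute its core values $F_\Theta(Z_{i,j})$ to see that the underlying weight matrix is $W_1W_2$, concluding by Theorem~\ref{thm: unique}. One small addition the paper makes explicit is that you should also verify the core values $F_\Theta(Z_{i,j})$ vanish when $i,j$ both lie in $[s]$ or both in $[p]$, so that the underlying graph is bipartite; this follows from the same Hamming-weight-preservation argument you already use for (i).

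Your worry about Pfaffian signs is overcautious and can be dropped. The argument never descends to the Pfaffian level: the introduced matchgates $\Gamma_{G_1}$, $\Gamma_{G_2}$ already have the sign-crossover vertices baked into their construction, so $F_{\Gamma_1}$ and $F_{\Gamma_2}$ are honest matchgate signatures, their matrix product is the signature of the glued gadget by the gadget-composition definition, and the core values are read off directly as values of that signature. No separate parity bookkeeping is needed, and the alternative ``building-block'' route you sketch is unnecessary.
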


\begin{proof}
We connect the $t$ output external edges of $\Gamma_1$ with the  $t$ input external edges of $\Gamma_2$  to get a new matchgate $\Theta$, whose function is $F_{\Gamma_1}F_{\Gamma_2}$.

Since the underlying graphs are bipartite, if input edges has $q$ $0$s, $F_{\Gamma_1}$ and $F_{\Gamma_2}$ force the output edges has also $q$ $0$s. When all external edges take value $1$, the new matchgate gives value $1$, so $\Theta$ admits a underlying graph.

After calculating the value of $F_\Theta$ on the core part, we find that its underlying graph is a bipartite graph and its weight matrix coincides with the matrix product $W_1W_2$.
\end{proof}

$\Gamma_1$ and $\Gamma_2$ communicate through $t$ edges. In the matchgate function form, the  $t$ edges send information in a length $t$ string with $2^t$ possibilities. In the bipartite graph form, we force one source in  $[s]$ and one sink in $[p]$, so the $t$ edges send information by a length $t$ string of Hamming weight $t-1$. There is a path in $W_1W_2$ from source to sink, is equal to, there is a perfect matching from source to sink in the new matchgate $\Theta$.

Since the strings of Hamming weight $t-1$ is a subset of the matchgate index set $[2]^t$, there is another intuitive way to prove Theorem \ref{thm: same product}. We illustrate it with the example $s=t=p=3$. $F_{\Gamma_1}$ is a square matrix indexed by $000, 001, 010, 011, 100, 101, 110, 111$.
Because $\Gamma_1$ has a bipartite underlying graph, $F_{\Gamma_1}$  has form

\[\left ( \begin{array}{cccccccc}
 \& & \& & \& & 0 & \& & 0 & 0 & 0 \\
 \& & \& & \& & 0 & \& & 0 & 0 & 0 \\
 \& & \& & \& & 0 & \& & 0 & 0 & 0 \\
 0 & 0 & 0 & * & 0 & * & * & 0 \\
 \& & \& & \& & 0 & \& & 0 & 0 & 0 \\
 0 & 0 & 0 & * & 0 & * & * & 0 \\
 0 & 0 & 0 & * & 0 & * & * & 0 \\
 0 & 0 & 0 & 0 & 0 & 0 & 0 &  1
 \end{array} \right ),  \]

where $*$ stands for a weight of underlying graph, and $\&$ stands for a function value not on the core.
Some value $0$ are also weights of the  underlying graph, but the others are not.

$F_{\Gamma_1}F_{\Gamma_2}$ is the product of two such matrices. It is block wise, so the $*$ part is an isolated multiplication. The product restricted to $\{111\}$ tells us the new matchgate admits underlying graph. The $0$s in last row and last column of the product tell us the underlying graph is bipartite. The product restricted block $\{011,101,110\}$ is $W_1W_2$ give use the rest of the conclusion.

\section{Canonical form}  \label{sec cnanonical form}

\begin{theorem} \label{thm canonical form}
Given a matchgate $\Gamma$ of $s$ input external edges and $t$ output external edges. The following holds.
\begin{enumerate}
\item
We can apply elementary matchgate transformations to $\Gamma$, to get a new matchgate $\Theta$, such that
$\Theta$ has the same function as a matchgate $\Omega$, where $\Omega$ has a weighted bipartite underlying graph $G([s],[t],E,W)$ and $E$ is a matching $M=\{(1,s+t), (2,s+t-1), \ldots, (r,s+t-r+1)\}$ of size $r$.
\item
Starting from $\Omega$, and applying the inverse of the above transformations, we get a matchgate construction for $F_\Gamma$.
\item $F_\Gamma$ has rank $2^r$, where $r \leq \min \{s,t\}$, depending on $\Gamma$.
\end{enumerate}
\end{theorem}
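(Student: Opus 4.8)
The plan is to prove the three parts of Theorem~\ref{thm canonical form} together, using the elementary matchgate transformations developed in Section~\ref{sec ele} as a reduction procedure that drives an arbitrary matchgate toward the claimed canonical form. First I would dispose of the trivial case $F_\Gamma \equiv 0$ (take $r=0$ and the empty underlying graph), so assume $F_\Gamma$ is not identically zero; pick an input $X$ with $F_\Gamma(X) \neq 0$. As in the subsection ``Getting the underlying graph'', apply flips to the input edges $e_i$ with $X_i = 0$ and flips to the output edges so that $\mathbf{1}$ becomes a nonzero entry, then a global factor $1/F(\mathbf{1},\mathbf{1})$ so that $F(\mathbf{1},\mathbf{1})=1$. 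By the special case of Theorem~\ref{thm:base graph} (sketched in the excerpt), the resulting matchgate admits a weighted underlying graph $G$ on vertex set $[s+t]$, with weight matrix $W(i,j)=F(Z_{i,j})$ on the core $U_2(\mathbf{1})$.

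The next step is to make $G$ bipartite between $[s]$ and $[t]$, i.e., kill all weights $W(i,j)$ with $i,j$ both in $[s]$ or both in $[t]$. Here I would invoke the bar transformation: by the Proposition on ``Bar and base graph'', a weight $-W(i,i+1)$ bar applied to $e_i,e_{i+1}$ zeroes out exactly the entry $W(i,i+1)$ while leaving all other weights unchanged. Combined with exchanges (which permute edges and, up to a global factor $-1$ and absolute value, permute the underlying graph correspondingly), one can bring any within-side pair into adjacent position, apply a bar, and restore the order. Iterating over all within-side pairs, and being careful that barring one edge does not resurrect a previously-killed one (it doesn't, since bar touches only a single weight), I obtain a matchgate with a \emph{bipartite} underlying graph $G([s],[t],E,W)$. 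Now $W$ is an $s \times t$ matrix; let $r$ be its rank. Using slash transformations, which by the Proposition on ``Slash and base graph'' and Theorem~\ref{thm: same product} act on $W$ by left/right multiplication by elementary matrices (row/column operations on $W$), I reduce $W$ to a matrix supported on the anti-diagonal matching $\{(1,s+t),(2,s+t-1),\dots,(r,s+t-r+1)\}$; a final sequence of global factors normalizes the surviving weights to $1$ if desired, or one just records them. This establishes part~1, and since every transformation used is invertible with an explicitly matchgate-realized inverse, running them backward from $\Omega$ reconstructs a matchgate for the original $F_\Gamma$, giving part~2.

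For part~3, observe that the matrix form $F_\Omega$ of the canonical matchgate whose bipartite underlying graph is the matching $M$ of size $r$ factors as a tensor product: each matched pair $(k, s+t-k+1)$ contributes a $2\times 2$ block $\bigl(\begin{smallmatrix} 1 & 0\\ 0 & w_k\end{smallmatrix}\bigr)$ (the signature of a single weighted edge with a dangling edge at each end), and each of the $s-r$ unmatched input vertices and $t-r$ unmatched output vertices contributes a rank-one factor $[1,0]$ or its transpose (a dangling edge at a degree-one vertex, forced to $0$ or carrying no internal constraint). Hence $F_\Omega$ has rank $2^r$, provided all $w_k \neq 0$ — which I would arrange during the slash/normalization step by pivoting on nonzero weights, the rank of $W$ being exactly $r$. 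Since flips, global factors, exchanges, bars, and slashes each preserve the rank of the matchgate function (each of the corresponding Propositions asserts this), $\operatorname{rank}(F_\Gamma) = \operatorname{rank}(F_\Omega) = 2^r$, and clearly $r = \operatorname{rank}(W) \le \min\{s,t\}$.

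The main obstacle I anticipate is the bookkeeping in the bipartization step: the bar transformation is defined (in the excerpt) for an \emph{adjacent} pair $e_i, e_{i+1}$ on the same side, so moving a non-adjacent within-side pair into position requires interleaving exchanges, and I must check that (a) exchanges never move an output edge into the input block or vice versa in a way that breaks the bookkeeping, and (b) the order in which within-side edges are eliminated can be chosen so the process terminates — e.g. eliminate them in a fixed order and use that a bar changes only one weight. A secondary subtlety is confirming that the slash transformation, as analyzed via Theorem~\ref{thm: same product}, really implements arbitrary elementary column/row operations on the bipartite weight matrix $W$ (the excerpt only spells out one $-w_2/w_1$ slash between two output-adjacent edges sharing a common input neighbor); one needs the full repertoire of such operations, plus exchanges, to achieve the Smith-normal-form-style reduction to the anti-diagonal matching. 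Once these combinatorial details are pinned down, the rank computation in part~3 is routine.
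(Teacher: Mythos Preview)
Your proposal is correct and follows essentially the same approach as the paper's own proof: zero case, then flips and a global factor to obtain an underlying graph, exchanges and bars to bipartize, slashes (interpreted via Theorem~\ref{thm: same product} as elementary row/column operations on the weight matrix) to reduce to a matching, and finally rank preservation of all transformations to conclude part~3. The paper is in fact sketchier than you are on the bookkeeping points you flag; your tensor-product computation of $\operatorname{rank}(F_\Omega)=2^r$ is more explicit than the paper's one-line ``Because $F_G$ has rank $2^r$''.
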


\begin{proof}
If $F_\Gamma$ is the zero function, $r=0$ and $\Theta$ is simply the empty matchgate.

Suppose $F_\Gamma$ is not zero function. We apply  elementary matchgate transformations to it.
\begin{enumerate}
\item Apply some flips to move a nonzero value $F_{\Gamma} (I,J)$.  We get a new matchgate $\Theta_1$ such that $F_{\Theta_1}= F_{\Gamma} (I,J) \neq 0$.

\item Apply global factor $1/F_{\Theta_1}$ to get $\Theta_2$. $\Theta_2$ admits a underlying graph $G_2([s+t],E,W_2)$.

\item  Whenever there is an edge $(i,j) \in [s] \times [s]$ with nonzero weight, apply exchange to move vertices $i,j$ to positions $1,2$ respectively, followed by a global factor $-1$ to keep the existing of an underlying graph if necessary. We get a new underlying graph $G_3([s+t],E,W_3)$. Apply weight $-W_3(1,2)$ bar to $e_1$ and $e_2$.

\item Do the similar thing to remove the edges in $[t] \times [t]$. We get a new bipartite underlying graph $G_4([s+t],E,W_4)$.

\item Apply a series of row slashes to $\Gamma_{G_4}$. By Theorem \ref{thm: same product}, the new matchgate is equal to an introduced matchgate with clear underlying wight matrix form. We pick the slashes, such that they transform the weight matrix $W_4$ almost into a upper triangular form, except the lack of a row reordering, since the weight matrix of slash corresponds to row addition transformation on weight matrix.

\item Apply a series of column slash to make the underlying graph into a matching.

\item Apply proper row and column exchanges and necessary global factor $-1$ to  get $G$.
\end{enumerate}

Because all these transformations are invertible, we may apply the inverse to the resulting matchgate, to get a construction of $\Gamma$.

Because $F_G$ has rank $2^r$, so is $F_\Gamma$.
\end{proof}

\section{Base collapse} \label{sec collapse}

Suppose $M$ is a matrix of size $n \times 2^t$ used as base in a holographic algorithm for $\#\mathbf{F}|M \mathbf{H}$.
Of course the rank $r'$ of $M$ is no more than $\min \{n, 2^t\}$. Base collapse occurs when $2^t >r'$.
The proof is separated into two parts. One part shows under some condition, how to get a new base matrix $N$ with many zero columns. The other part is the following lemma.

\begin{lemma} \label{lem 0columns}
Suppose the base $M$  used in a holographic algorithm for $\#\mathbf{F}|M \mathbf{H}$,  satisfies that if $M(x,y_1y_2\cdots y_t) \neq 0$ then $y_{r+1}=y_{r+2}=\cdots =y_t=1$.  Then the base $N$, where $N(x,y_1\cdots y_r)=M(x,y_1y_2\cdots y_r 1 \cdots 1)$ can be used to give a holographic algorithm for $\#\mathbf{F}|M \mathbf{H}$ too. \footnote{
The condition in this lemma can be relaxed to, that there exist constants $c_{r+1},c_{r+2},\ldots,c_l \in [2]$ such that if $M(x,y_1y_2\cdots y_l) \neq 0$ then $y_{r+1}=c_{r+1}, y_{r+2}=c_{r+2},\ldots y_l=c_l$.
}
\end{lemma}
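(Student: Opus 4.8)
The plan is to exploit the Holant theorem (Theorem~\ref{thm:holant}) together with the fact that in a holographic algorithm the two sides $\#\mathbf{F}M|\mathbf{H}$ are realized by matchgates, and that matchgate signatures are ``supported on a core'' in the sense of Theorems~\ref{thm: unique}--\ref{prop:kehua2}. The key observation is that the hypothesis on $M$ — namely $M(x,y_1\cdots y_t)\neq 0$ forces $y_{r+1}=\cdots=y_t=1$ — says exactly that the column space of $M$ lives inside the coordinate subspace of $[2]^t$ on which the last $t-r$ bits are pinned to $1$. I want to show that throwing away those pinned bits changes neither the generator side nor the recognizer side, up to result equivalence.

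First I would handle the recognizer side. For each $H\in\mathbf{H}$ of arity $R_H$, the recognizer realizes $H$ (a matchgate), and the base turns it into $M^{\otimes R_H}H$. Because each factor $M$ has all its nonzero columns indexed by strings ending in $1^{t-r}$, the product $M^{\otimes R_H}H$ does not depend on the values of $H$ at inputs whose bit-blocks are \emph{not} of that form; only the restriction of $H$ to the ``pinned'' sub-hypercube matters. Concretely, $N^{\otimes R_H}H'=M^{\otimes R_H}H$ where $H'$ is $H$ restricted to inputs in which every block of $t$ coordinates ends in $1^{t-r}$ (equivalently, $H'$ is $H$ composed with the inclusion $[2]^{r}\hookrightarrow[2]^{t}$, $y_1\cdots y_r\mapsto y_1\cdots y_r1^{t-r}$, in each variable block). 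So I must check that this restricted $H'$ is still a matchgate signature, i.e. can be realized by a recognizer over $r$-bit blocks. I expect this is where I appeal to the matchgate-transformation / canonical-form machinery of Sections~\ref{sec ele}--\ref{sec cnanonical form}: pinning an external edge of a matchgate to the value $1$ is (by a flip, if the constant were $0$) the operation of deleting that vertex from the underlying graph, which again yields a matchgate. Iterating over the $t-r$ pinned edges in each block gives a matchgate realizing $H'$. (The footnote's relaxation to arbitrary constants $c_{r+1},\dots,c_l\in[2]$ is absorbed by composing with flips first.)

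Next, the generator side. The generator realizes some matchgate whose signature equals $FM^{\otimes R_F}$ for each $F\in\mathbf{F}$. Since $M=N\cdot P$ where $P$ is the $2^r\times 2^t$ inclusion matrix sending $y_1\cdots y_r$ to the column $y_1\cdots y_r1^{t-r}$ (and sends the other basis vectors to $0$), we have $FM^{\otimes R_F}=(FN^{\otimes R_F})P^{\otimes R_F}$; but $P^{\otimes R_F}$ is precisely ``pad each output block with $1^{t-r}$ and zero elsewhere,'' which is realized by attaching, in each block, $t-r$ extra external edges forced to $1$ — i.e. the inverse of the edge-deletion above. Hence a matchgate realizing $FN^{\otimes R_F}$ exists iff one realizing $FM^{\otimes R_F}$ does, and the existing generator supplies the latter. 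Putting the two sides together and applying the Holant theorem with base $N$ shows $\#\mathbf{F}|N\mathbf{H}'$ computes the same values as $\#\mathbf{F}|M\mathbf{H}$, which is the claimed holographic algorithm (with $\mathbf{H}$ replaced by the result-equivalent $\mathbf{H}'$, exactly as in the statement's use of ``$\#\mathbf{F}|M\mathbf{H}$'' up to result equivalence).

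The main obstacle, I expect, is the bookkeeping that ``forcing an external edge of a matchgate to $1$ produces a matchgate'' and that this is exactly inverse to ``appending an external edge forced to $1$,'' carried out blockwise and compatibly with the tensor powers $M^{\otimes R}$; one has to be careful that the planar/ordering constraints of \#Pl-$\mathbf{M}|\mathbf{M}$ are respected when deleting or appending those edges. Everything else is linear algebra: recording $M=NP$ with the explicit pinning matrix $P$ and pushing it through the tensor powers on both sides of the Holant identity.
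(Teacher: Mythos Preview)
Your proposal is correct and follows essentially the same route as the paper: pin the last $t-r$ bits of every bunch to $1$, observe that this turns the generator $FM^{\otimes R_F}$ into $FN^{\otimes R_F}$ and the recognizer $H$ into its restriction $H'$ (the paper's $R$), and check $N^{\otimes R_H}H'=M^{\otimes R_H}H$. The only correction is that you are over-engineering the key step: pinning an external edge of a matchgate to $1$ does not require the canonical-form machinery of Sections~\ref{sec ele}--\ref{sec cnanonical form} or any appeal to matchgate identities---it is simply the operation of connecting the unary Exactly One signature $[0,1]$ (a single vertex with one dangling edge, trivially a matchgate) to that external edge, exactly as the paper does; this immediately handles both sides, respects planarity because $e_{r+1},\ldots,e_t$ are consecutive in each bunch, and absorbs the footnote's relaxation via a flip.
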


\begin{proof}
The holographic algorithm for $\#\mathbf{F}|M \mathbf{H}$ computes it as a  $\#\mathbf{F}M|\mathbf{H}$ problem, as shown in the left picture of Figure \ref{Fig 0columns}. For each $F \in \mathbf{F}$ or arity $r_F$, $F  M^{\otimes r_F}$ is a matchgate (generator). Each $H \in \mathbf{H}$ is a matchgate (recognizer).
Generators and recognizers are connected by bunches of edges. Each bunch contains $t$ edges, for example $e_1, e_2, \ldots, e_t$.

\begin{figure}[hbtp]
	\begin{center}
		\includegraphics[width=0.8\textwidth]{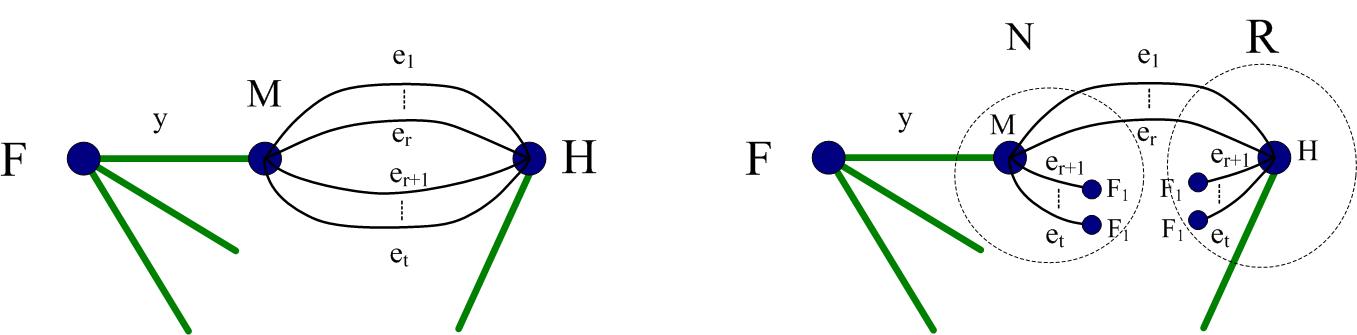}
	\caption{Holographic algorithms use base $M$ and $N$ respectively.}
	\label{Fig 0columns}
	\end{center}
\end{figure}

Given any instance of $\#\mathbf{F}M|\mathbf{H}$, the value is a summation over all assignments of edges, including $e_1, \ldots, e_t$. By the condition, if one of edges $e_{r+1}, \ldots, e_t$ is assigned value $0$, $M$ contributes a value $0$ as one factor. In this case, $F  M^{\otimes r_F}$ contributes $0$ too.

Hence, we only need to sum over the assignments assigning $0$ to $e_{r+1}, \ldots, e_t$ in each bunch. Then, the value is equal to the value of the second graph in Figure \ref{Fig 0columns}, where we cut $e_{r+1}, \ldots, e_t$ and connect onto them the unary Exactly One function $F_1=[0,1]$.  $F_1$ forces its input to be $1$.

We look $M$ connected by $t-r$ $F_1$ functions as the new base matrix $N$. Each $H \in \mathbf{H}$ becomes $R$, which is a function $H$ connected by $r_H(t-r)$ many $F_1$ functions. Function set $\mathbf{H}$ becomes $\mathbf{R}$.

We already explained that we get the same value as the original instance.
By the same reason, $N\mathbf{R}=M\mathbf{H}$.
Because we use only Exactly One function to modify the generators and recognizers, $F  N^{\otimes r_F}$ and $R$ are still matchgate. Hence, we get a holographic algorithm  $\#\mathbf{F}N|\mathbf{R}$ for  $\#\mathbf{F}|N\mathbf{R}$ which is also  $\#\mathbf{F}|M \mathbf{H}$, using a collapsed base $N$.
\end{proof}

Given a base $M$ of size $n \times 2^t$, we want to get a new base $N$ of the same size but with many zero columns. This can be achieved by common column transformation of matrix, but to keep $F M^{\otimes s}$ a matchgate, only  matchgate transformations shall be used.

\begin{definition}
Suppose $M$ is a matrix of rank $2^r$. A matrix $A$ is called a full rank matchgate realizer of $M$, if $AM$ is a matchgate of rank $2^r$.
\end{definition}

Because the rows of $AM$ and $M$ have the same rank, there exists a general inverse $A^{-1}$ of $A$ such that $A^{-1} A M=M$. We apply elementary row transformations $B$ and column transformations $C$  to $AM$ to get its canonical form $BAMC$ by Theorem \ref{thm canonical form}, which satisfies that if $BAMC(x,y_1y_2\cdots y_t) \neq 0$ then $y_{r+1}=y_{r+2}=\cdots =y_t=1$.
Both $B$ and $C$ have their inverse $B^{-1}$ and $C^{-1}$.
$MC=A^{-1}B^{-1}BAMC$ satisfies the same condition,  since the row transformation $A^{-1}B^{-1}$ do not affect zero columns. We emphasize that  $A^{-1}B^{-1}$  is not necessary a matchgate transformation.

It is obvious that besides $\#\mathbf{F}M|\mathbf{H}$, $\#\mathbf{F}(MC)|C^{-1}\mathbf{H}$ is also a holographic algorithm  for  $\#\mathbf{F}|M \mathbf{H}$. We have proved the following lemma.

\begin{lemma} \label{lem base MC}
Suppose the base $M$  of a holographic algorithm has size $n \times 2^t$ and rank $2^r$. Suppose $M$ has a full rank matchgate realizer. Then there exists a series of elementary matchgate column transformations, which as a union can be expressed as a matrix $C$ of size $2^t \times 2^t$, such that there is a holographic algorithm for the same problem using base $MC$. What's more, $MC$ satisfies that if $MC(x,y_1y_2\cdots y_t) \neq 0$ then $y_{r+1}=y_{r+2}=\cdots =y_t=1$.
\end{lemma}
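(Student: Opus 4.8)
Looking at Lemma \ref{lem base MC}, the statement essentially collects the discussion preceding it into a clean lemma. The proof is mostly assembly of already-established machinery, so my plan focuses on organizing the pieces cleanly.

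\medskip
\noindent\textbf{Proof proposal.}

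The plan is to leverage the canonical form theorem (Theorem \ref{thm canonical form}) applied to the matchgate $AM$, where $A$ is the assumed full rank matchgate realizer, and then to transport the resulting zero-column structure back from $AM$ to $M$ by a careful bookkeeping of which transformations touch columns versus rows. First I would fix $A$ with $AM$ a matchgate of rank $2^r$; since $AM$ and $M$ have rows of the same rank, there is a generalized inverse $A^{-1}$ with $A^{-1}AM = M$. Then I would invoke Theorem \ref{thm canonical form}(1) on the matchgate $AM$: it yields elementary matchgate transformations whose row-part aggregates to a matrix $B$ and whose column-part aggregates to a matrix $C$ of size $2^t\times 2^t$, such that $BAMC$ is (the function of) a matchgate $\Omega$ with bipartite underlying graph whose edge set is a size-$r$ matching $\{(1,s+t),\dots,(r,s+t-r+1)\}$. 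The matching structure forces exactly the claimed support property: $BAMC(x,y_1\cdots y_t)\ne 0 \Rightarrow y_{r+1}=\cdots=y_t=1$, because in the underlying graph the vertices $s+1,\dots,s+t-r$ are incident to no edge, hence their external edges must be matched by the external edge itself, which in the Hamming-weight-$(t-1)$ bookkeeping of introduced matchgates pins those coordinates to $1$ (and the canonical-form proof already arranges the output coordinates so that these unmatched vertices are precisely the last $t-r$).

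\medskip
Next I would remove $B$ and $A$ from the left. The key observation — already noted in the paragraph before the lemma — is that left multiplication by the invertible matrix $A^{-1}B^{-1}$ acts only on rows and therefore does not create or destroy zero columns: a column of $BAMC$ that is identically zero stays zero after any row operation. Hence $MC = A^{-1}B^{-1}(BAMC)$ inherits exactly the same support condition, namely $MC(x,y_1\cdots y_t)\ne 0 \Rightarrow y_{r+1}=\cdots=y_t=1$. Crucially, $C$ is a product of elementary matchgate column transformations (flip, exchange, bar, slash), so $\#\mathbf{F}(MC)$ still consists of matchgates: for each $F\in\mathbf{F}$ of arity $r_F$, $FM^{\otimes r_F}$ is a matchgate, and applying the matchgate column transformations encoded by $C$ to each of its output legs keeps it a matchgate by the results of Section \ref{sec ele}. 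By the Holant theorem (Theorem \ref{thm:holant}), $\#\mathbf{F}(MC)|C^{-1}\mathbf{H}$ is result-equivalent to $\#\mathbf{F}M|\mathbf{H}$, hence to $\#\mathbf{F}|M\mathbf{H}$, and since the recognizer side is transformed by $C^{-1}$ (an invertible matrix) each $C^{-1}H$ is still a matchgate function. So $MC$ is a legitimate base for a holographic algorithm for the same problem, with the desired zero-column pattern.

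\medskip
The main subtlety — the only place the argument is not purely formal — is the claim that the column-part $C$ really is a single matrix acting uniformly on every output bunch, independent of which generator or recognizer it is plugged into, and that it only touches output legs (so it commutes with the row side and with the $F$-to-generator correspondence). This is exactly the content of Theorem \ref{thm canonical form}: the transformations there are ``elementary matchgate transformations'' built from $\Delta$-gadgets attached to individual external edges, and the flip/exchange/bar/slash analysis of Section \ref{sec ele} shows each contributes a fixed $2^t\times 2^t$ factor (a tensor of a $2\times 2$ or $4\times 4$ block with identities) acting on the column index. I would therefore phrase the proof as: apply Theorem \ref{thm canonical form} to $AM$ to obtain $B$, $C$; peel off the row factors using that they preserve zero columns; then cite the Holant theorem and the matchgate-preservation of the column transformations to conclude $\#\mathbf{F}(MC)|C^{-1}\mathbf{H}$ is the required holographic algorithm. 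Since every ingredient is already proved earlier in the paper, the lemma follows, and indeed (as the surrounding text says) the proof is essentially complete in the preceding paragraph.
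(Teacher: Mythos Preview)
Your proposal is correct and follows essentially the same approach as the paper: it is exactly the assembly carried out in the paragraph preceding the lemma, namely apply Theorem~\ref{thm canonical form} to $AM$ to obtain $B$ and $C$, observe $MC=A^{-1}B^{-1}(BAMC)$ inherits the zero-column pattern since row operations cannot revive zero columns, and finally note $\#\mathbf{F}(MC)\,|\,C^{-1}\mathbf{H}$ is a holographic algorithm for the same problem. One small wording point: your parenthetical ``since the recognizer side is transformed by $C^{-1}$ (an invertible matrix)'' is not by itself the reason $C^{-1}H$ remains a matchgate function; the actual reason, which you do set up earlier, is that $C$ and hence $C^{-1}$ are products of elementary matchgate transformations (Section~\ref{sec ele} shows each such transformation has an inverse of the same kind).
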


Put Lemma \ref{lem base MC} and Lemma \ref{lem 0columns} together directly, we get a base collapse theorem.

\begin{theorem} \label{thm basecollapse1}
Suppose the base $M$  of a holographic algorithm has size $n \times 2^t$ and rank $2^r$. Suppose $M$ has a full rank matchgate realizer. Then there is a holographic algorithm solving the same problem with a base of size $n \times 2^r$.
\end{theorem}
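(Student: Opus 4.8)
The plan is to chain together the two lemmas that have already been proved, so that essentially no new work is required; the theorem is a composition result. First I would invoke Lemma \ref{lem base MC}: since $M$ has size $n \times 2^t$, rank $2^r$, and possesses a full rank matchgate realizer, there exists a union of elementary matchgate column transformations, expressible as an invertible matrix $C$ of size $2^t \times 2^t$, such that $MC$ gives a holographic algorithm for the same problem $\#\mathbf{F}|M\mathbf{H}$, and moreover $MC(x,y_1y_2\cdots y_t) \neq 0$ implies $y_{r+1}=y_{r+2}=\cdots=y_t=1$. The point of using matchgate column transformations rather than arbitrary column operations is exactly that they preserve the matchgate property of each generator $F M^{\otimes r_F}$; this is what Theorem \ref{thm canonical form} guarantees through the canonical form with a matching of size $r$ as underlying graph.

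Next I would feed $MC$ into Lemma \ref{lem 0columns} in the role of its "$M$". The hypothesis of Lemma \ref{lem 0columns} — that nonzero entries force $y_{r+1}=\cdots=y_t=1$ — is precisely the conclusion of Lemma \ref{lem base MC}, so the two fit together with no gap. Lemma \ref{lem 0columns} then produces a collapsed base $N$ with $N(x,y_1\cdots y_r) = MC(x,y_1\cdots y_r 1\cdots 1)$, of size $n \times 2^r$, and asserts that $N$ together with the modified recognizer set $\mathbf{R}$ (each $H$ composed with copies of the unary Exactly One function $[0,1]$) gives a holographic algorithm for $\#\mathbf{F}|N\mathbf{R}$, which equals $\#\mathbf{F}|MC\,\mathbf{H}$, which equals $\#\mathbf{F}|M\mathbf{H}$. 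So the same original problem is solved with a base of size $n \times 2^r$, as claimed.

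I do not expect any real obstacle here, since both constituent lemmas are already established; the only thing to be careful about is bookkeeping. Specifically, one should check that the column-transformed base $MC$ really does satisfy the \emph{exact} zero-column pattern demanded by Lemma \ref{lem 0columns} (nonzero only when the last $t-r$ coordinates are all $1$), as opposed to merely "all $c_i$" for some constants $c_i$ — though even in the latter case the footnote to Lemma \ref{lem 0columns} covers it, so the argument is robust. One should also note that the generators stay matchgates at every stage: $C$ is a matchgate column transformation (Lemma \ref{lem base MC}), and the passage from $MC$ to $N$ only attaches $[0,1]$ functions, which again keeps $F N^{\otimes r_F}$ a matchgate (as argued inside the proof of Lemma \ref{lem 0columns}). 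Assembling these observations, the statement follows immediately by putting Lemma \ref{lem base MC} and Lemma \ref{lem 0columns} together, which is exactly the one-line proof the paper indicates.
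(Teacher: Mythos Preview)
Your proposal is correct and matches the paper's own proof exactly: the paper states that the theorem follows by putting Lemma~\ref{lem base MC} and Lemma~\ref{lem 0columns} together directly, and your chaining argument is precisely that composition, with the bookkeeping spelled out.
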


Lemma \ref{lem base MC} and Theorem \ref{thm basecollapse1} require the existence of a full rank matchgate realizer. In fact, what we need is a matchgate function $P$ such that each row of $M$ can be linearly expressed by the rows of $P$. That is, there exists $Q$ such that $QP=M$. Then we simplify $P$ to $BPC$, and use $QB^{-1}BPC=MC$ as the new base.
We call $P$ a matchgate cover of $M$.

\begin{theorem} \label{thm basecollapse new added}
Suppose the base $M$  of a holographic algorithm has size $n \times 2^t$. Suppose $M$ has a matchgate cover of rank $2^r$. Then there is a holographic algorithm solving the same problem with a base of size $n \times 2^r$.
\end{theorem}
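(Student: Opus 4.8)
The plan is to follow the same strategy used to prove Theorem \ref{thm basecollapse1}, but to replace the full rank matchgate realizer with the weaker notion of a matchgate cover. First I would fix a matchgate cover $P$ of $M$ of rank $2^r$, so that there exists a matrix $Q$ with $QP = M$. The key point is that a holographic algorithm for $\#\mathbf{F}|M\mathbf{H}$ is executed as $\#\mathbf{F}M|\mathbf{H}$, and what matters for the generator side is only that $FM^{\otimes r_F}$ is a matchgate; the matrix $M$ itself need not be one. So I want to manipulate $M$ by \emph{column} operations (which act on the $\mathbf{H}$ side and turn recognizers into recognizers after the corresponding inverse) while keeping the generators intact.

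Next I would apply Theorem \ref{thm canonical form} to the matchgate $P$: there are elementary matchgate transformations, realized as a row transformation $B$ and a column transformation $C$, both invertible, such that $BPC$ is in canonical form, i.e. $BPC(x, y_1\cdots y_t) \neq 0$ implies $y_{r+1} = \cdots = y_t = 1$. Since $BPC$ has only $2^r$ nonzero columns and rank $2^r$, the canonical form of $P$ has exactly the structure required by Lemma \ref{lem 0columns}. Now I would consider the matrix $MC = (QP)C = Q(PC) = (QB^{-1})(BPC)$. Because $QB^{-1}$ is a row transformation and row transformations do not create nonzero entries in columns that were already zero, $MC$ inherits the zero-column pattern of $BPC$: if $MC(x, y_1\cdots y_t) \neq 0$ then $y_{r+1} = \cdots = y_t = 1$. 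This is precisely the hypothesis of Lemma \ref{lem 0columns}.

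To conclude, I would observe that $\#\mathbf{F}(MC)|C^{-1}\mathbf{H}$ is still a holographic algorithm for $\#\mathbf{F}|M\mathbf{H}$: the column transformation $C$ is a genuine matchgate transformation, so $C^{-1}$ applied to each recognizer $H$ keeps it a matchgate, and $FM^{\otimes r_F} = F(QP)^{\otimes r_F}$ is unaffected on the generator side — indeed, the generators and the overall instance value are exactly the same as in the original algorithm using base $M$, only the bookkeeping of which bunch of edges carries which signal has changed. Then Lemma \ref{lem 0columns} applied to the base $MC$ (whose rank is $2^r$, with the last $t-r$ coordinates forced to $1$) collapses it to a base $N$ of size $n \times 2^r$, giving a holographic algorithm for the same problem. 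Feeding this chain together establishes the theorem.

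The main obstacle is the verification that passing from $P$ to $M = QP$ truly preserves the zero-column structure after applying the row transformation $B$ — i.e. that $QB^{-1}$ cannot fill in a column of $MC$ that is zero in $BPC$. This is immediate once one notes that a column of a product $RW$ is a linear combination (with coefficients from $R$) of... wait, more precisely: the $j$-th column of $(QB^{-1})(BPC)$ equals $(QB^{-1})$ times the $j$-th column of $(BPC)$, so if the $j$-th column of $BPC$ is the zero vector then so is the $j$-th column of the product. The only subtlety worth spelling out is that $QB^{-1}$ need not itself be a matchgate transformation — but this does not matter, exactly as in the discussion preceding Lemma \ref{lem base MC}, because we never require $MC$ to be a matchgate, only that $FM^{\otimes r_F}$ stays a matchgate (which is automatic since we changed nothing on the generator side) and that the column transformation $C$ relating the two algorithms is a matchgate transformation.
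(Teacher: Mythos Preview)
Your proof is correct and follows exactly the argument the paper sketches in the paragraph immediately preceding the theorem: factor $M=QP$, bring the matchgate $P$ to canonical form $BPC$ via Theorem \ref{thm canonical form}, observe that $MC=QB^{-1}(BPC)$ inherits the zero-column pattern since left multiplication preserves zero columns, then invoke Lemma \ref{lem 0columns}. The only small wording issue is that in the algorithm with base $MC$ the generators are $F(MC)^{\otimes r_F}$, not literally $FM^{\otimes r_F}$; they differ by the matchgate transformation $C$ and hence remain matchgates, which is all you need.
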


The ideal condition is nothing but that there is a holographic algorithm using base $M$. Then there is a matchgate $F M^{\otimes s}$. If we write $F$ into a size $n^{s-1} \times s$ matrix,\footnote{There are several ways to get such a matrix, if $F$ is not symmetric.} the matchgate also has expression $M' \ ^{\otimes s-1} F M$. We hope one of these candidates $M'\ ^{\otimes s-1} F$ is a full rank matchgate realizer.
If $F M^{\otimes s}$ gives a full rank matchgate realizer, it is called a generator of full rank in \cite{CaiFu}.
Unfortunately, the existence of a  holographic algorithm does not obviously promise a generator of full rank, or a full rank matchgate realizer, or a matchgate cover of small rank.

We introduce another base collapse theorem using another condition.

\begin{theorem}  \label{thm final}
Suppose there is a holographic algorithm $\#\{FM^{\otimes s}\}|\mathbf{H}$ for $\#\{F\}|M\mathbf{H}$, where $F$ is a symmetric function and $M$ is a base of size $n \times 2^t$. Then there is a holographic algorithm solving a problem equivalent to $\#\{F\}|M\mathbf{H}$, using a base $N$ of size $n \times 2^{r}$, where $r\leq \log n$.
\end{theorem}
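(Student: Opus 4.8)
The plan is to reduce Theorem~\ref{thm final} to the already‐proved Theorem~\ref{thm basecollapse new added} by producing, from the symmetry of $F$, a matchgate cover of $M$ of rank at most $2^{\lfloor \log n\rfloor}$. The starting data is the matchgate $G := F M^{\otimes s}$, which is a generator of arity $s$ over $t$-bit wires. The issue, as noted in the remarks before the theorem, is that a generator need not itself be of full rank, and in general there may be no full rank matchgate realizer. So the first step is to squeeze extra structure out of $G$ using Theorem~\ref{thm canonical form}: there is a matchgate $\Omega$ with the same function as (a transformed copy of) $G$, whose underlying graph is a bipartite matching of size $r_0 := \operatorname{rank}_2$-exponent of $G$, and in particular $\operatorname{rank}(G) = 2^{r_0}$ with $r_0 \le \min\{s, t\}$. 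The symmetry of $F$ should be exploited here: because $F$ is symmetric, all $s$ "legs'' of the generator $F M^{\otimes s}$ play interchangeable roles, so the rank of $G$ viewed as a map $[n]^{s-1} \to (\mathcal C^{2^t})$ along any single leg is the same; call this common value $2^{r}$. The key claim I would prove is that this $r$ satisfies $r \le \lfloor \log n\rfloor$. Intuitively: along one leg, the image of the generator lies in the span of the $n$ vectors obtained by fixing the first variable of $F$ to each of its $n$ values (here we use that $F M^{\otimes s} = (M'^{\otimes s-1} F)\, M$ with $M'^{\otimes s-1} F$ an $n^{s-1}\times n$ matrix, so the column span along the distinguished leg is contained in the row space of $M$, which has dimension $\le n$), hence the rank is $\le n$; but a matchgate of rank $<2^k$ on $t$ wires has rank exactly a power of two by Theorem~\ref{thm canonical form}, so the rank is at most $2^{\lfloor \log n\rfloor}$.

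Granting the rank bound, the remaining work is to manufacture the matchgate cover $P$ of $M$ explicitly. Write $G = F M^{\otimes s} = (M'^{\otimes (s-1)} F)\, M$ where $M'^{\otimes(s-1)}F$ is one of the matrix flattenings of $F$ along the distinguished leg; set $Q := M'^{\otimes(s-1)}F$ so that $QP = M$ once we take $P := M$... this is circular unless we instead read it the other way. The correct move is: the flattening of the matchgate $G$ along the distinguished leg, call it $P$, is itself the signature of a matchgate of arity $t$ (a single output leg of $G$ together with the $t$ wire-bits reread as inputs — this is a genuine matchgate because $G$ is, and rereading external edges as input vs.\ output is allowed, cf.\ the discussion of the core and of Theorem~\ref{thm:base graph}). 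And $G = Q\,M$ forces every row of $M$ into the row span of $P$, so $P$ is a matchgate cover of $M$. By the rank computation above, $\operatorname{rank}(P) = \operatorname{rank}(G) = 2^{r}$ with $r \le \lfloor\log n\rfloor$. Now apply Theorem~\ref{thm basecollapse new added} verbatim: simplify $P$ to its canonical form $BPC$ by Theorem~\ref{thm canonical form}, take $N := MC$ (equivalently $N = QB^{-1}(BPC)$), which by the column-annihilation property of the canonical form has all nonzero entries supported on columns with $y_{r+1}=\cdots=y_t=1$, and invoke Lemma~\ref{lem 0columns} to drop the dead wires, yielding a holographic algorithm with base of size $n \times 2^r$. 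The resulting problem is equivalent (not literally identical) to $\#\{F\}|M\mathbf H$ because the recognizer side gets conjugated by $C$ and then has Exactly-One functions attached, exactly as in the proofs of Lemma~\ref{lem base MC} and Lemma~\ref{lem 0columns}; this is why the statement only claims an equivalent problem.

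The main obstacle I expect is the rank bound $r \le \lfloor\log n\rfloor$, and more precisely the claim that the relevant flattening of $G$ really is a matchgate signature to which Theorem~\ref{thm canonical form} applies. The subtlety is that a generator $F M^{\otimes s}$ has arity $s$ over $t$-bit wires — so as a matchgate it has $st$ Boolean external edges, not $t$ — and one must argue that fixing $s-1$ of the legs (or rather, that the map into $\mathcal C^{2^t}$ obtained by one leg) can be realized by a matchgate of arity $t$ whose rank is controlled. One clean way: pick a nonzero value of $G$, so the core is defined; the symmetry of $F$ means the core's support is "uniform'' across legs; then the bipartite-matching canonical form of Theorem~\ref{thm canonical form}, applied with the $t$ wire-bits of a \emph{single} leg-copy grouped as the output index and the rest as input, directly gives rank a power of two, and the cover relation $G = QM$ pins that power of two to be $\le 2^{\lfloor\log n\rfloor}$ because $\operatorname{rank}(M) \le n$. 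Making the grouping and the "single leg'' reduction fully rigorous — i.e.\ that we may legitimately peel off one generator leg and land in the hypotheses of Theorem~\ref{thm basecollapse new added} — is where the symmetry hypothesis on $F$ is genuinely used and is the step that will require the most care.
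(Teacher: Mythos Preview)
Your rank bound is correct and matches the paper: flattening $G = FM^{\otimes s}$ along one leg gives a $2^{t(s-1)}\times 2^t$ matchgate matrix $P = AM$ with $A = M'^{\otimes s-1}F$, and since $P$ factors through the $n\times 2^t$ matrix $M$ its rank is at most $n$, hence equal to some $2^r$ with $r\le\lfloor\log n\rfloor$ by Theorem~\ref{thm canonical form}.

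The gap is in the next step. You assert that ``$G = QM$ forces every row of $M$ into the row span of $P$, so $P$ is a matchgate cover of $M$,'' but the implication runs the wrong way: $P = AM$ gives $\langle P\rangle \subseteq \langle M\rangle$, not $\langle M\rangle \subseteq \langle P\rangle$. A matchgate cover requires the latter (there must exist $Q$ with $QP = M$), so Theorem~\ref{thm basecollapse new added} does not apply directly to $P$. In general $\langle AM\rangle$ is a proper subspace of $\langle M\rangle$, which is exactly why the ``full rank matchgate realizer'' hypothesis is nontrivial and why the paper does \emph{not} simply invoke Theorem~\ref{thm basecollapse new added}.

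The paper's way around this is to change the problem, not the cover. It selects a basis $SAM$ for $\langle AM\rangle$, completes it by some $T$ to a basis of $\langle M\rangle$, writes $M = C_1(SAM) + C_2 T$, and then uses the direct-sum decomposition $\langle M\rangle = \langle SAM\rangle \oplus \langle T\rangle$ to show $AM = A\,C_1(SAM)$. The symmetry of $F$ is then used to upgrade this single-leg identity to $FM^{\otimes s} = F(C_1 SAM)^{\otimes s}$: since every flattening of $FM^{\otimes s}$ equals $AM$, one can replace $M$ by $C_1 SAM$ on each leg in turn. This yields an \emph{equivalent} problem $\#\{F C_1^{\otimes s}\}\mid (SAM)\mathbf H$ whose base $SAM$ has rank $2^r$ and for which $AC_1$ \emph{is} a full rank matchgate realizer, so Lemma~\ref{lem base MC} and Lemma~\ref{lem 0columns} finish. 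Your identification of the rank bound is the right first move, but the passage to the collapsed base requires this replacement of $M$ by $C_1 SAM$ on every leg, and that is where symmetry is genuinely spent --- not merely in the observation that all legs have the same rank.
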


\begin{proof}
By Theorem \ref{thm canonical form}, the rank of $M'\ ^{\otimes s-1} F M$ is a power of $2$. Obviously, the rank is no more than $n$. Suppose it is $2^{r}$.

Let $A$ denote $M'\ ^{\otimes s-1} F$. To utilize Lemma \ref{lem base MC}, we hope $A$ is a full rank matchgate realizer of $M$. Generally, this is not right. However, noticing $M$ serves only one function $F$ on the left side, we may get rid of the redundant rank in $M$ to achieve this.

The matrix $AM$ has $2^{t(s-1)}$ rows and rank $2^{r}$.  Suppose matrix $S_{2^{r}, 2^{t(s-1)}}$ selects a maximum independent row vector set $SAM$ of $AM$.

$\langle M \rangle$ denotes the space spanned by rows in $M$.
$\langle SAM \rangle = \langle AM \rangle \subseteq \langle M \rangle$.

There is a matrix $T_{n-2^{r},2^t}$, such that the block matrix $B=\left (  \begin{array}{c}  SAM \\ T  \end{array}\right )$ satisfying $\langle B \rangle = \langle M \rangle$ and $\langle M \rangle = \langle SAM \rangle \oplus \langle T \rangle$.

There exists an invertible matrix $C_{n \times n}$ such that $C B=M$. Separate $C$ into two blocks $\left (  \begin{array}{cc}  C_1 C_2  \end{array}\right )$ of size $n \times 2^{r}$  and  $n \times (n-2^{r})$.

Because $\langle M \rangle$ is the direct sum of  $\langle SAM \rangle$ and  $\langle T \rangle$,
$AM=ACB=AC_1 SAM+AC_2 T=AC_1 SAM$. Notice $F$ is symmetric and $A=M'\ ^{\otimes s-1} F$, $FM^{\otimes s}=F  (C_1 SAM) ^{\otimes s} = F  C_1^{\otimes s} (SAM) ^{\otimes s}$.

$\#\{FM^{\otimes s}\}|\mathbf{H}$  is equal to $\#\{ F  C_1^{\otimes s} (SAM) ^{\otimes s}\}|\mathbf{H}$, which is a holographic algorithm using base $SAM$ for  $\#\{ F  C_1^{\otimes s}\}| (SAM)\mathbf{H}$.

 $\#\{ F  C_1^{\otimes s}\}| (SAM)\mathbf{H}$ is equivalent to $\#\{F\}|M\mathbf{H}$. Its base $SAM$ has size $2^{r} \times 2^t$ and rank $2^{r}$. This base has a full rank matchgate realizer $AC_1$.  By Lemma \ref{lem base MC}, we get a new base, and then by Lemma \ref{lem 0columns}, the new base collapses to size $2^{r} \times 2^{r}$.
\end{proof}

From the proof of Theorem \ref{thm final}, we know the function $F$ utilizes only a subspace of $\langle M \rangle$.
In a general problem $\#\mathbf{F}|\mathbf{H}$, there are functions $F_1, F_2 \in \mathbf{F}$. Assume $F_1$ utilizes only a proper subspace of $\langle M \rangle$. If we cut off the other part of $\langle M \rangle$ as in the proof, then maybe we lost the part useful for $F_2$. This is the reason that Theorem \ref{thm final} requires a singleton set $\mathbf{F}$.

In this paper, we solve the most general base collapse of holographic algorithm under some conditions.
We give an example satisfying neither of the two conditions in Theorem \ref{thm basecollapse1} and \ref{thm final}. Suppose $\#\{F_1, F_2\}|M\mathbf{H}$ has a holographic algorithm using base $M$. Neither generator $F_1 M^{\otimes r_1}$ nor generator $F_2 M^{\otimes r_2}$ contains a full rank matchgate realizers for $M$. It is still possible to utilize a matchgate cover of $M$ to get some base collapse. However, there is no characterization of matchgate cover, and we do not know to which size the base can be collapsed.

\bibliographystyle{plain}

\bibliography{draft}

\end{document}